\documentclass[journal,twoside,web]{ieeecolor}
\usepackage{generic}
\usepackage{cite}
\usepackage{amsmath,amssymb,amsfonts}
\usepackage{algorithmic}
\usepackage{graphicx}
\usepackage{textcomp}
\usepackage{amsmath}
\usepackage{amssymb}
\usepackage{amsfonts}
\usepackage{graphicx}
\usepackage{enumerate}
\usepackage{mathrsfs}
\usepackage{float}
\usepackage{cite}
\usepackage{tikz}
\usetikzlibrary{matrix,arrows,decorations.pathmorphing}
\usepackage{verbatim}
\usepackage{mathtools}
\usepackage{caption}
\usepackage{subcaption}
\usepackage{soul}
\setstcolor{magenta}
\usetikzlibrary{arrows}
\usepackage{tabularx}
\newtheorem{assumption}{Assumption}

\newtheorem{definition}{Definition}
\newtheorem{proposition}{Proposition}
\newtheorem{theorem}{Theorem}
\newtheorem{lemma}{Lemma}
\newtheorem{remark}{Remark}

\newcommand{\csx}[1]{\textcolor{blue}{#1}}
\newcommand{\zcc}[1]{\textcolor{blue}{#1}}

\def\BibTeX{{\rm B\kern-.05em{\sc i\kern-.025em b}\kern-.08em
    T\kern-.1667em\lower.7ex\hbox{E}\kern-.125emX}}
\begin{document}
\title{\LARGE \bf
Stability and Robustness of Tensor-Coupled  Flow-Conservation Dynamical Systems on Hypergraphs }

\author{Chencheng Zhang$^{1}$, Hao Yang$^{1}$, Bin Jiang$^{1}$, \IEEEmembership{Fellow, IEEE,} Shaoxuan Cui$^{2}$ 
 \thanks{This work was supported in part by the National Natural Science Foundation of China under Grants 62073165, 62233009, and 62403241, in part by Postdoctoral Fellowship Program of CPSF under Grants GZB20240975, 2025T181136, 2025M774280, in part by Jiangsu Provincial Outstanding Postdoctoral Program, and in part by the National Key Laboratory Foundation of Helicopter Aeromechanics. }
\thanks{$^{1}$ C. Zhang, H. Yang and B. Jiang are with the College of Automation Engineering, Nanjing University of Aeronautics and Astronautics, Nanjing, 211106, China (e-mail: zhangchencheng@nuaa.edu.cn; haoyang@nuaa.edu.cn; binjiang@nuaa.edu.cn). }
\thanks{$^{2}$ S. Cui is with the Bernoulli Institute for Mathematics, Computer Science and Artificial Intelligence, University of Groningen, Groningen, 9747 AG Netherlands (E-mail: s.cui@rug.nl)}}

\maketitle

\begin{abstract}
This paper develops an entropy-based stability and robustness framework for nonlinear hypergraph dynamics with conservation and flow balance. 
We consider generator-form systems on the  simplex whose state-dependent transition rates capture higher-order (tensor) interactions among nodes. 
Under a \csx{tensor generalized detailed-balance (TGDB) condition}, we show that the system admits a unique equilibrium and an entropy Lyapunov function ensuring global asymptotic stability. 
The Jacobian restricted to the tangent subspace of the simplex is Hurwitz, and its spectral gap determines the exponential convergence rate. 
Building on this structure, we derive first-order sensitivity bounds of the equilibrium under perturbations of the coupling tensor and establish a local input-to-state stability (ISS) estimate with respect to external inputs. 
The results reveal a quantitative link between the spectral gap and the system's robustness margin: larger spectral gaps imply smaller equilibrium shifts and faster recovery under structural or parametric perturbations. 
Numerical experiments on tensor-coupled flow models confirm the theoretical predictions and illustrate how the proposed entropy-dissipating framework unifies stability and robustness analysis for conservative higher-order network systems.
\end{abstract}

\begin{IEEEkeywords}
Hypergraph, Tensor, Robustness, Stability, Networked system
\end{IEEEkeywords}

\section{Introduction}
Many real-world networked systems, ranging from social opinion formation~\cite{neuhauser2022consensus} and ecological population evolution~\cite{cui2025analysis} to multi-agent coordination~\cite{neuhauser2021consensus, zhang2023flocking} and epidemic spreading~\cite{cui2025sis, liang2025discrete}, involve interactions that are not limited to pairwise links but occur through higher-order group relations~\cite{bick2023higher, Majhi2022DynamicsHigherOrder, Faccin2022DirectedHypergraphs, Zhang2023}.
Such collective interactions can be naturally represented by \emph{hypergraphs}, whose edges (hyperedges) connect arbitrary subsets of nodes.
While traditional graph-based models capture diffusion or influence between pairs of agents, hypergraph frameworks can encode group-level couplings, cooperative or antagonistic multi-node interactions, and higher-order diffusion patterns that cannot be reduced to pairwise forms.
This shift from edges to hyperedges has prompted a new generation of higher-order dynamical models, often expressed through multilinear or tensor coupling structures.

Conservation of a networked system refers to the invariance of global state (e.g., the sum or mean of node states is preserved throughout the trajectories of the system). 
In classical graph-based dynamics such as multi-agent systems, this property naturally arises from the Laplacian's row-sum-zero structure, ensuring that the overall or average position error between agents remains constant over time~\cite{zeng2016convergence, Mesbahi2010}. 
\zcc{Traditional network dynamics on ordinary graphs already form a mature research area. Pairwise couplings underlie consensus and cooperative control in multi-agent networks~\cite{zeng2016convergence, Mesbahi2010}, epidemic and diffusion processes on networks~\cite{cui2022discrete}, and Markovian flow models with sensitivity and entropy structures~\cite{Mitrophanov2005SensitivityErgodicMC, Maas2011GradientEntropy}. Robustness and resilience of such graph-based systems have likewise been studied from disturbance attenuation, ISS, and attack-resilience viewpoints~\cite{Dashkovskiy2007NetworkISS, Pasqualetti2013AttackDetection}. These works provide the classical baseline for our study. The present paper asks how analogous stability and robustness questions should be formulated when the couplings are generated not by pairwise edges but by higher-order tensors on a hypergraph.}
Rather than graph-based dynamics, recent advances have established rigorous formulations for \emph{hypergraph-coupled dynamical systems}, where the evolution of nodal states is governed by tensor-based interaction terms that capture higher-order interactions.
In~\cite{cui2024metzler, cui2024discrete}, they analyzed a broad class of such higher-order systems, and these results were further utilized to study hypergraph Laplacian dynamics~\cite{cui2025higher}, higher-order Lotka--Volterra~\cite{cui2025analysis}, and SIS processes on hypergraphs~\cite{cui2025sis, liang2025discrete}.
All these studies revealed how nonpairwise coupling and higher-order interactions can fundamentally reshape equilibrium patterns, convergence rates, and collective behaviors.
In~\cite{chen2021controllability, pickard2023observability, zhang2024global, dong2024controllability}, both controllability and observability are further investigated.
These works demonstrated that tensor algebra is a powerful tool for analyzing hypergraph systems.
However, the above modeling frameworks of tensor-coupled hypergraph systems and other studies of higher-order networks~\cite{battiston2020networks, Neuhauser2020, Zhang2023} rarely consider whether such conservation actually holds.
When interactions extend beyond pairwise couplings, higher-order terms can also induce conservation of quantities, meaning that the total sum of states can also be invariant.
As a result, the influence of conservation properties on the stability, convergence, and collective behavior remains largely unexplored in the context of higher-order or tensor-based networked systems.

Building upon these insights, the present work turns to the complementary class of \emph{conservative} hypergraph systems, in which the total mass (or probability, flow) is preserved and the dynamics satisfy a flow-balance condition analogous to reversible Markov systems~\cite{Maas2011GradientEntropy, Erbar2012RicciMarkov}.
In this regime, conservation introduces additional geometric constraints on the state space and enables the emergence of entropy-like Lyapunov functions, leading to an entropy-dissipating characterization of system stability.
This structural perspective unifies higher-order hypergraph dynamics with a class of nonlinear, reversible flow systems, forming the basis for the robustness analysis developed in our work.

However, while the stability of conservative hypergraph systems has received increasing attention, much less is known about their \emph{robustness}.
In contrast to classical networked systems, where linear or pairwise diffusive couplings allow spectral tools to quantify resilience~\cite{Mitrophanov2005SensitivityErgodicMC}, higher-order systems are governed by nonlinear, state-dependent transition rates that make standard spectral analysis inapplicable.
Consequently, questions such as how equilibrium states shift under perturbations of the hypergraph tensor, how convergence rates relate to structural properties, and how entropy dissipation reacts to external inputs remain largely open.
Furthermore, realistic networked systems inevitably encounter perturbations, communication faults, and external disturbances~\cite{LeBlanc2013ResilientConsensus, Pasqualetti2013AttackDetection}, highlighting the need for robust analytical frameworks.
Classical robustness concepts such as Input-to-State Stability (ISS)~\cite{sontag2008input, Jiang1994SmallGainISS, Zhang2022FractionalFTC} and small-gain conditions for interconnected networks~\cite{Dashkovskiy2007NetworkISS, Mironchenko2017InfiniteDimISS, Yang2020Exponential} provide useful inspiration, yet they have not been extended to the nonlinear, tensor-coupled conservative setting.

This paper closes this gap by establishing an entropy-based robustness theory for nonlinear hypergraph dynamics with conservation and flow balance.
Our main contributions are:

\textit{1) Modeling and considering stability of generator-form systems on the simplex:}
The dynamics are formulated as generator-form systems on the simplex, where state-dependent transition rates capture higher-order (tensor) interactions among nodes.
Under a generalized detailed-balance (GDB) condition, the system admits a unique equilibrium and exhibits global asymptotic stability.
This modeling strategy enables rigorous treatment of complex high-order interactions while preserving analytical tractability.

\textit{2) Developing an entropy Lyapunov framework for nonlinear hypergraph dynamics:}
When the dynamics incorporate both state conservation and flow balance, the first general analysis of stability for tensor-coupled high-order network systems under such constraints is provided.
This approach overcomes the limitations of traditional energy- or linear-Lyapunov-based methods, which are generally inapplicable to nonlinear high-order interactions.

\textit{3) Analyzing robustness via spectral gap and perturbation theory:}
Based on the equilibrium structure, a regular perturbation analysis derives first-order sensitivity bounds with respect to structural perturbations, and the system is shown to satisfy a local ISS estimate under external excitations.
A direct quantitative link is established between the \emph{spectral gap} of the Jacobian and the system's robustness margin.


Overall, the paper contributes to the emerging field of higher-order network dynamics by unifying stability and robustness analysis within a common entropy-dissipating framework. 
It generalizes classical detailed-balance systems to the nonlinear, tensor-coupled hypergraph setting, offering new theoretical tools for quantifying sensitivity and resilience in high-dimensional conservative flows.

\zcc{The rest of the paper is organized as follows. Section~II introduces the notation, preliminaries, and hypergraph background. Section~III formulates the tensor-coupled flow-conservation dynamics and establishes the entropy-equilibrium structure together with the stability results. Section~IV develops the robustness analysis, including equilibrium sensitivity and local ISS bounds under perturbations. Section~V presents numerical experiments, and Section~VI concludes the paper.}

\section{Preliminaries and Notations}

\subsection{Basic notation}
For an integer $n\ge 1$, we write $[n]:=\{1,2,\dots,n\}$. 
Vectors are column vectors unless stated otherwise, and $\mathbf{1}$ denotes the all-ones vector in $\mathbb{R}^n$. 
The standard simplex is
\[
\Delta_n := \{x\in\mathbb{R}^n_{>0} \mid \mathbf{1}^\top x = 1\},
\]
whose tangent subspace is $\mathbf{1}^\perp := \{z\in\mathbb{R}^n \mid \mathbf{1}^\top z = 0\}$.
For a differentiable vector field $f:\mathbb{R}^n\to\mathbb{R}^n$, the Jacobian at $x$ is denoted 
$J(x) := \partial f / \partial x |_{x}$.
We use $\|\cdot\|$ for the Euclidean norm and $\|\cdot\|_F$ for the Frobenius norm of a matrix or tensor.
For a square matrix $M$, $\sigma(M)$ denotes its spectrum, and $\mathrm{Re}(\lambda)$ the real part of an eigenvalue~$\lambda$.
\zcc{A continuous function $\alpha:\mathbb{R}_{\ge 0}\to\mathbb{R}_{\ge 0}$ is said to be of class $\mathcal{K}$ if it is strictly increasing and satisfies $\alpha(0)=0$. It is of class $\mathcal{K}_\infty$ if, in addition, $\alpha(s)\to\infty$ as $s\to\infty$. A continuous function $\beta:\mathbb{R}_{\ge 0}\times\mathbb{R}_{\ge 0}\to\mathbb{R}_{\ge 0}$ is of class $\mathcal{KL}$ if, for each fixed $t\ge 0$, the map $\beta(\cdot,t)$ is of class $\mathcal{K}$ and, for each fixed $s\ge 0$, the map $\beta(s,\cdot)$ is nonincreasing and converges to zero as $t\to\infty$.}

\begin{definition}[Local input-to-state stability (LISS) {\cite{mironchenko2016local}}]
\label{def:liss}
For the nonlinear system
\[
\dot{x}(t) = f(x(t),u(t)),
\]
where $x \in \mathbb{R}^n$, $u \in \mathbb{R}^m$ are locally bounded, and
$f:\mathbb{R}^n \times \mathbb{R}^m \rightarrow \mathbb{R}^n$
is continuous and locally Lipschitz in $x$,
the system is said to be locally input-to-state stable (LISS) with respect to $u(t)$ if there exist a class $\mathcal{KL}$ function $\beta$, a class $\mathcal{K}$ function $\gamma$, and a constant $r>0$ such that
\begin{equation}
\setlength\abovedisplayskip{7pt}
\setlength\belowdisplayskip{7pt}
|x(t)| \le \beta(|x(0)|,t) + \gamma(\|u\|_{[0,t)}),
\label{eq:liss-def}
\end{equation}
for all $t \ge 0$, all initial states $x(0)$ satisfying $|x(0)| \le r$, and all inputs $u(t)$ satisfying $\|u\|_{[0,t)} \le r$.
\end{definition}

\begin{lemma}[Local Lyapunov criterion for LISS]
\label{lem:liss}
Consider the nonlinear system $\dot{x}=f(x,u)$ with $x\in\mathbb{R}^n$, $u\in\mathbb{R}^m$, where $f$ is continuous and locally Lipschitz in $x$.
Suppose there exist a continuously differentiable function $V:\mathbb{R}^n\to\mathbb{R}_{\ge 0}$, a constant $r>0$, class $\mathcal{K}_\infty$ functions $\alpha_1,\alpha_2$, and class $\mathcal{K}$ functions $\alpha_3,\sigma$ such that for all $(x,u)\in\mathbb{R}^n\times\mathbb{R}^m$ satisfying $|x|\le r$ and $|u|\le r$,
\begin{align}
&\alpha_1(|x|)\le V(x)\le \alpha_2(|x|), \label{eq:liss-lyap-bound}\\
&\dot V(x,u)\le -\,\alpha_3(|x|)+\sigma(|u|). \label{eq:liss-lyap-diss}
\end{align}
Then the state $x(t)$ of the system is locally input-to-state stable with respect to $u(t)$.
\end{lemma}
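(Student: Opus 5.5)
The plan is the standard comparison-function argument for a local ISS-Lyapunov function, with one extra step to keep the trajectory inside the ball $\{|x|\le r\}$ where the hypotheses hold. We assume $f(0,0)=0$, as is implicit in the definition of LISS.

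\textbf{Step 1: convert the dissipation inequality into a gain-margin form.} Define $\chi:=\alpha_3^{-1}\circ 2\sigma$, restricted to arguments where this is defined. Then, whenever $|x|\ge \chi(|u|)$, inequality \eqref{eq:liss-lyap-diss} gives
\[
\dot V\le -\tfrac12\alpha_3(|x|).
\]
Using \eqref{eq:liss-lyap-bound}, this in turn yields $\dot V\le -\alpha(V)$ with $\alpha:=\tfrac12\alpha_3\circ\alpha_2^{-1}\in\mathcal{K}$.

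\textbf{Step 2: forward invariance of a sublevel set (the main obstacle).} The estimates hold only for $|x|\le r$ and $|u|\le r$, so we must show that trajectories never leave this region. Choose $r_0\le r$ small enough that
\[
\alpha_1^{-1}\Bigl(\max\{\alpha_2(r_0),\,\alpha_2(\chi(r_0))\}\Bigr)<r ,
\]
which is possible because every function involved is continuous and vanishes at $0$. Set $c:=\max\{\alpha_2(r_0),\alpha_2(\chi(r_0))\}$.

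Suppose $|x(0)|\le r_0$ and $\|u\|\le r_0$. Then $V(x(0))\le c$. If $V(x)=c$, then $|x|\ge\alpha_2^{-1}(c)\ge\chi(|u|)$, so Step 1 gives $\dot V<0$ on the boundary of the sublevel set. Hence $\{V\le c\}\subset\{|x|<r\}$ is forward invariant. It follows that solutions exist for all $t\ge0$ and stay in the region where the hypotheses apply. A continuity argument handles a first exit time, if one is assumed to exist.

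\textbf{Step 3: comparison and assembly.} Let $t_1$ be the first time at which $|x|\le\chi(\|u\|)$. On $[0,t_1)$ we have $\dot V\le-\alpha(V)$. The standard comparison lemma (Sontag) then gives a $\beta_V\in\mathcal{KL}$ with $V(x(t))\le\beta_V(V(x(0)),t)$.

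After $t_1$, the same boundary argument as in Step 2, applied to the level $\alpha_2(\chi(\|u\|))$, shows that
\[
V(x(t))\le \alpha_2(\chi(\|u\|)).
\]
Combining the two phases gives
\[
|x(t)|\le \alpha_1^{-1}\bigl(\beta_V(\alpha_2(|x(0)|),t)\bigr)+\alpha_1^{-1}\bigl(\alpha_2(\chi(\|u\|_{[0,t)}))\bigr).
\]
This is \eqref{eq:liss-def} with
\[
\beta(s,t):=\alpha_1^{-1}\bigl(\beta_V(\alpha_2(s),t)\bigr),\qquad \gamma:=\alpha_1^{-1}\circ\alpha_2\circ\chi,
\]
and with the radius $r_0$ in place of $r$. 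Causality lets us replace the supremum of $|u|$ over all times by $\|u\|_{[0,t)}$.

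Alternatively, the whole argument can be cited directly from the LISS characterization in \cite{mironchenko2016local}.
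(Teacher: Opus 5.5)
The paper gives no proof of this lemma. It states it as the standard Lyapunov sufficient condition for LISS and takes the LISS notion from \cite{mironchenko2016local}, so the paper's route is simply citation.

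Your proposal gives the usual direct argument, and it is correct. Its steps are: convert the dissipation inequality to implication form with $\chi=\alpha_3^{-1}\circ 2\sigma$; show forward invariance of a small sublevel set; apply the comparison lemma; combine the two time phases. Compared with citing the result, the direct proof does two things. It produces $\beta$ and $\gamma$ explicitly. It also shows that the LISS radius $r_0$ is in general smaller than the radius $r$ on which the Lyapunov inequalities hold. This is harmless, because the $r$ in the LISS definition is existentially quantified. The citation is shorter, and the cited work also contains the converse direction.

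There are two small points to tighten.

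\textbf{The sublevel-set inclusion.} As written, $\{V\le c\}\subset\{|x|<r\}$ is false. $V$ is only sandwiched between $\alpha_1$ and $\alpha_2$ on the closed ball $\{|x|\le r\}$, and outside that ball $V$ may become small again. Work instead with
$\Omega_c:=\{x:\ |x|\le r,\ V(x)\le c\}$, which lies in $\{|x|\le\alpha_1^{-1}(c)\}\subset\{|x|<r\}$.
A trajectory starting in $\Omega_c$ can leave the ball only after first leaving $\Omega_c$ while still inside the ball. Because $u$ is merely measurable, exclude this on time intervals rather than pointwise at the boundary. On any interval where $V(x(t))>c$ and $|x(t)|\le r$, you have $|x(t)|>\alpha_2^{-1}(c)\ge\chi(r_0)\ge\chi(|u(t)|)$. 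Hence $\dot V\le-\tfrac12\alpha_3(|x(t)|)<0$ a.e., so $V(x(\cdot))$ cannot cross the level $c$ upward. This is presumably what your ``continuity argument'' refers to, and it should be written this way.

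\textbf{The assumption $f(0,0)=0$.} You do not need to assume it. Your argument never uses it, and it follows from the hypotheses anyway. Take $x(0)=0$ and $u\equiv 0$. For small $t$ you have $V(x(t))\le V(0)=0$, hence $x(t)\equiv 0$ and $f(0,0)=0$.

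Finally, $\gamma=\alpha_1^{-1}\circ\alpha_2\circ\chi$ is only defined on $[0,r_0]$. It should be extended to a class $\mathcal{K}$ function on all of $\mathbb{R}_{\ge 0}$, which is trivial.
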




\subsection{Directed, weighted, and nonuniform hypergraphs}
A \emph{hypergraph} is a pair $\mathcal{H}=([n],\mathcal{E})$ where $\mathcal{E}$ is a finite collection of nonempty subsets $e\subseteq[n]$ called \emph{hyperedges}.  
Each hyperedge $e$ may carry a weight $w_e>0$ and an orientation, written $e=(T_e,H_e)$, where $T_e\subseteq[n]$ and $H_e\subseteq[n]$ denote respectively the \emph{tail} (source) and \emph{head} (target) node sets.  
The hypergraph is \emph{directed} if some $e$ have $T_e\neq H_e$, and \emph{nonuniform} if hyperedges have varying cardinalities $|T_e|,|H_e|$. In this paper, we only focus on directed hypergraphs with one head \csx{for the same reason as stated in \cite{cui2024metzler,cui2025sis}. Our main interest is in nodal dynamics on hypergraphs. From this viewpoint, each hyperedge represents the joint influence of a group of nodes on the state evolution of a single target node, so the one-head formulation is sufficient for describing the dynamics at the node level.} We call $\mathcal{H}$ \emph{weighted} if it is equipped with a weight function $w:\mathcal{E}\to\mathbb{R}_{>0}$. 

\csx{A general non-uniform hypergraph can be naturally decomposed into a multilayer structure \cite{cui2024metzler}, where each layer collects hyperedges of the same cardinality and therefore forms a uniform hypergraph. In this viewpoint, the original hypergraph is represented as the superposition of several uniform sub-hypergraph components, each associated with one interaction order. Such a decomposition is useful because it separates higher-order interactions by degree, allowing the overall system to be analyzed through its order-specific subnetworks while still preserving the full non-uniform structure at the aggregate level.}
\csx{Another common representation is \emph{graph projection} \cite{cui2024metzler}, in which a hypergraph is mapped into an ordinary graph so that classical graph-based tools can be applied. However, this transformation typically reduces higher-order interactions to pairwise relations and may therefore lead to a loss of higher-order structural information. The both representation approaches are illustrated via Fig. \ref{fig:ill}.}

\begin{figure}
        \centering
        \includegraphics[height=5cm]{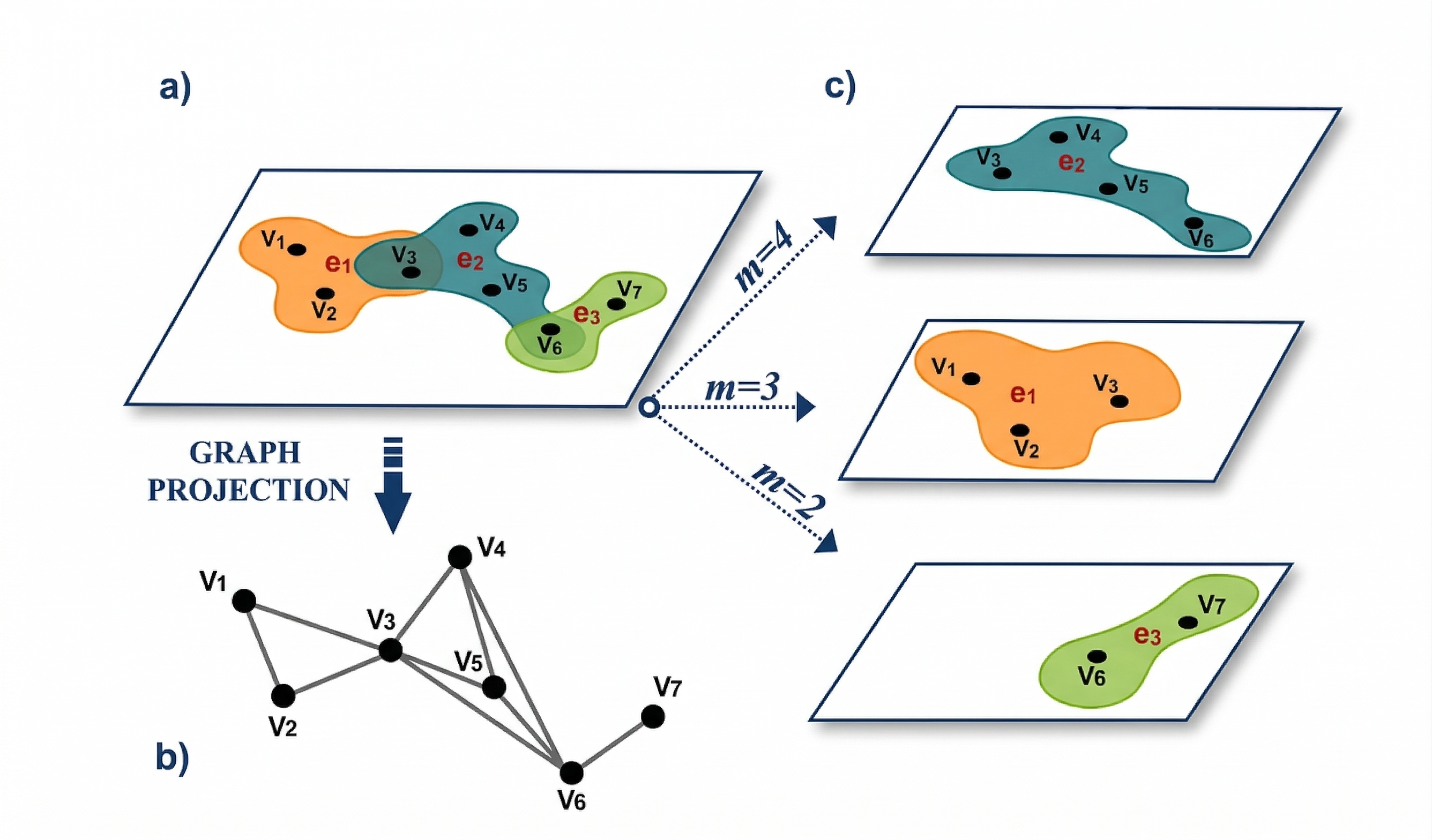}
        \caption{\csx{Illustration figure of a non-uniform undirected hypergraph. The original hypergraph is illustrated in a). The decomposition into different layers with each layer a uniform hypergraph is illustrated in b). The Hypergraph can be projected as a graph as c).}}
        \label{fig:ill}
\end{figure}

\subsection{Adjacency tensors and interaction encoding}
To algebraically represent higher-order interactions, we associate with $\mathcal{A}$ a collection of \emph{adjacency tensors} 
\[
\mathcal{A}^{(r)}=\{a^{(r)}_{i_1 i_2 \dots i_{r+1}}\}\in\mathbb{R}^{n^{r+1}}_{\ge 0}, 
\quad r=1,2,\dots,R,
\]
where each entry $a^{(r)}_{i_1 i_2\dots i_{r+1}}$ quantifies the contribution of the $r$-th order hyperedge connecting tail nodes $(i_2,\dots,i_{r+1})$ towards the head $i_1$.
The order $r$ indicates that the interaction involves $r+1$ nodes simultaneously.  
For instance, $r=1$ corresponds to pairwise edges (a standard adjacency matrix), and $r=2$ describes three-body interactions.  
If all nonzero entries satisfy $a^{(r)}_{i_1 i_2\dots i_{r+1}}=a^{(r)}_{i_{\pi(1)} i_{\pi(2)} \dots i_{\pi(r+1)}}$ for any permutation $\pi$, the tensor is called \emph{symmetric} and the corresponding hypergraph is undirected.



\csx{Given a collection $\mathcal{A}=\{\mathcal{A}^{(r)}\}_{r=1}^R$, each tensor
$\mathcal{A}^{(r)}=\{a^r_{i k I}\}$
describes one interaction layer of order $r$, where $i$ is the head node, $k$ is a distinguished tail node, and
$I=(i_1,\dots,i_{r-1})\in[n]^{r-1}$ denotes the remaining tail nodes participating in the same higher-order interaction.
Hence, $a^r_{i k I}$ quantifies how the joint state of the tail nodes $(k,i_1,\dots,i_{r-1})$ contributes to the influence toward node $i$.
For the pairwise case, there are no additional tail nodes beyond $k$, so $I$ is understood as the empty set and the corresponding coefficients reduce to a matrix form.
More generally, allowing multiple orders $r=1,\dots,R$ corresponds to a nonuniform hypergraph, whereas a single fixed order corresponds to a uniform hypergraph.}

\section{Tensor-Coupled Dynamics on Hypergraphs}

\subsection{Model formulation and generator representation}

Many distributed systems can be described as \emph{state-dependent
flow-conservation dynamics}, where each node~$i$ carries a fraction~$x_i$ of
a conserved quantity such as probability mass, population, or resource share.
Group interactions between nodes redistribute this quantity without changing
the total amount.  To capture such higher-order interactions, we model the
system on a (possibly directed, weighted, and nonuniform) hypergraph
$\mathcal{H}=([n],\mathcal{E})$ with node set $[n]=\{1,\dots,n\}$.

Denote by $x\in\Delta_n$ the state vector on the simplex
\begin{equation}\label{eq:simplex}
\Delta_n := \{x\in\mathbb{R}^n_{>0} \mid \mathbf{1}^\top x = 1\},
\end{equation}
where $\mathbf{1}^\top x=1$ represents the global conservation law. 

\begin{remark}
In general, the conserved quantity need not equal one, as it is determined by the
initial condition $\mathbf{1}^\top x(0)$ and remains constant along trajectories.
 The particular choice of the total sum (e.g., $1$) is not essential; it only sets a reference scale. Taking the multi-agent system as an example. When the sum is normalized to one, each state $x_i$ represents the relative proportion (or share) of the total quantity corresponding to agent 
$x_i$.
\end{remark}

\paragraph{Tensor formulation}
Higher-order interactions are modeled through polynomial (tensor) couplings.
For each ordered pair $(i,k)$ with $i\neq k$, let
$\mathcal{I}_{r}\subseteq [n]^{r-1}$ ($r=2,\dots,R$) denote index sets and
$a_{i k I}\ge 0$ nonnegative coefficients, where $I=(i_1,\dots,i_{r-1})\in\mathcal{I}_r$.
Define the polynomial rate kernel
\begin{equation}\label{eq:Qikx}
Q_{k\to i}(x)
    := \sum_{r=1}^{R}\;\sum_{I\in\mathcal{I}_r}
    a^r_{i k I}\,\prod_{j\in I} x_{j},
\end{equation}
where $r$ determines the hyperedge order and
$\mathcal{A}=\{a^r_{i k I}\}$ encodes the directed, weighted, and higher-order
structure of~$\mathcal{H}$, the superscript $r$ shows the order of the tensor.

\paragraph{Flow-conservation dynamics}
The tensor-coupled dynamics of node states are defined by the balance of
incoming and outgoing flows,
\begin{equation}\label{eq:tensor-dynamics}
\dot x_i = f_i(x;\mathcal{A})=\sum_{k\neq i} \Phi_{k\to i}(x;\mathcal{A}),
\qquad i\in[n],
\end{equation}
where $\Phi_{k\to i}$ is the net flow from node~$k$ to~$i$.
A canonical and physically interpretable choice is
\begin{equation}\label{eq:Phi-def}
\Phi_{k\to i}(x;\mathcal{A})
    = Q_{k\to i}(x)\,x_k - Q_{i\to k}(x)\,x_i.
\end{equation}
Here $Q_{k\to i}(x)$ represents a state-dependent transmission rate determined
by group interactions, while $x_k$ is the available mass at node~$k$.
The first term $Q_{k\to i}(x)x_k$ denotes the inflow to~$i$ from~$k$, and the
second term $Q_{i\to k}(x)x_i$ the outflow from~$i$ to~$k$.

\paragraph{Generator representation}
Define $Q_{k\to i}(x):=Q_{k\to i}(x)\ge 0$. Then~\eqref{eq:tensor-dynamics}
can be rewritten as
\begin{equation}\label{eq:generator}
\dot x_i
    = \sum_{k\neq i}\big(Q_{k\to i}(x)\,x_k - Q_{i\to k}(x)\,x_i\big),
    \qquad i\in[n].
\end{equation}
Equation~\eqref{eq:generator} expresses a \emph{state-dependent flow}:
each node updates its state according to the net difference between inflow and
outflow determined by higher-order interactions.

\csx{The proposed generator-form dynamics can be viewed as a hypergraph analogue of Laplacian flow. In the classical graph setting, Laplacian dynamics are generated by pairwise edge couplings and describe conservative redistribution of mass through edge-based flows. Here, the redistribution mechanism is generalized to higher-order interactions. In this sense, the tensors $\{A^{(r)}\}_{r=1}^R$ may be interpreted as higher-order Laplacian-like objects generating the hypergraph flow layer by layer, while the full dynamics arise from their superposition across different interaction orders. When only the first-order layer $r=1$ is present, the model reduces to the classical graph-based Laplacian flow as a special case.}

\paragraph{Mass conservation and simplex invariance}
The generator form~\eqref{eq:generator} defines a nonlinear conservative
system whose trajectories are constrained to the probability simplex.
The following result formalizes the positivity and invariance properties.

\begin{proposition}[Positivity and simplex invariance]\label{prop:positivity}
Consider the dynamics~\eqref{eq:generator} on
\[
\Delta_n := \{x\in\mathbb{R}^n_{\ge 0}\mid \mathbf{1}^\top x = 1\},
\]
with continuous nonnegative rate functions $Q_{k\to i}(x)\ge0$.
Then:
\begin{enumerate}[(i)]
\item (\emph{Mass conservation}) $\mathbf{1}^\top \dot x = 0$, so that
      $\mathbf{1}^\top x(t)=1$ for all $t\ge0$;
\item (\emph{Nonnegativity invariance})
      if $x_i(t_0)=0$ for some $t_0$, then $\dot x_i(t_0)\ge0$;
\item (\emph{Strict positivity preservation})
      if $x(0)\in\Delta_n^{\circ}=\{x>0\mid\mathbf{1}^\top x=1\}$,
      then $x(t)\in\Delta_n^{\circ}$ for all finite $t\ge0$.
\end{enumerate}
\end{proposition}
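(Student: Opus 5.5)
The proof treats the three items separately. Items (i) and (ii) are direct computations from the generator form~\eqref{eq:generator}; item (iii) needs a comparison (Gr\"onwall-type) argument.

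For (i), sum~\eqref{eq:generator} over $i\in[n]$:
\[
\mathbf{1}^\top\dot x=\sum_{i}\sum_{k\neq i}Q_{k\to i}(x)x_k-\sum_i\sum_{k\neq i}Q_{i\to k}(x)x_i .
\]
Relabel $(i,k)\mapsto(k,i)$ in the second double sum. Both sums then run over the same set of ordered pairs $i\neq k$ with identical summands, so they cancel. Hence $\mathbf{1}^\top\dot x\equiv0$ and $\mathbf{1}^\top x(t)=\mathbf{1}^\top x(0)=1$ for as long as the solution exists. Because the $Q$'s are polynomials, $f$ is locally Lipschitz, so solutions exist and are unique locally.

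For (ii), evaluate~\eqref{eq:generator} at a time $t_0$ with $x_i(t_0)=0$ and $x(t_0)\in\Delta_n$ (nonnegative entries). The outflow term $Q_{i\to k}(x)x_i$ vanishes. What remains is $\dot x_i(t_0)=\sum_{k\neq i}Q_{k\to i}(x(t_0))x_k(t_0)\ge0$, since every factor is nonnegative.

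For (iii), fix $x(0)\in\Delta_n^{\circ}$ and let $[0,T)$ be the maximal interval of existence. Let $\tau:=\sup\{t<T:\ x(s)>0 \text{ for all } s\in[0,t]\}>0$. On $[0,\tau)$ the state lies in the compact set $\Delta_n$, so the continuous rates are bounded there. Set $c:=\max_i\sup_{x\in\Delta_n}\sum_{k\neq i}Q_{i\to k}(x)<\infty$. Dropping the nonnegative inflow gives the differential inequality $\dot x_i\ge -c\,x_i$, and the comparison lemma yields $x_i(t)\ge x_i(0)e^{-ct}>0$ on $[0,\tau)$. If $\tau<T$, continuity would give $x_i(\tau)\ge x_i(0)e^{-c\tau}>0$ for every $i$, contradicting the definition of $\tau$; hence $\tau=T$. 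Boundedness of the trajectory on $\Delta_n$ rules out finite-time blow-up, so $T=\infty$. Therefore $x(t)\in\Delta_n^{\circ}$ for all finite $t$, with the explicit bound $x_i(t)\ge x_i(0)e^{-ct}$.

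The main obstacle is (iii). The step $\dot x_i\ge -c x_i$ needs a uniform bound on the rates, and that bound in turn needs the trajectory to stay in a compact set. The paper's rates are polynomials that may be negative or unbounded off the simplex, so the argument must first confine the trajectory to $\Delta_n$. The continuation argument via $\tau$, combined with (i), supplies exactly this confinement and avoids any circularity.
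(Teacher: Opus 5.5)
Your proposal is correct and follows the paper's proof: the index-swap cancellation for (i), the vanishing outflow term for (ii), and for (iii) dropping the nonnegative inflow to get $\dot x_i\ge -\bar s\,x_i$ with $\bar s$ a uniform rate bound on the compact simplex, followed by Gr\"onwall. Your continuation argument via $\tau$ is a useful addition. It justifies keeping the trajectory in $\Delta_n$, which is needed both for the uniform rate bound and for the sign of the dropped inflow terms. It also gives global existence. The paper's proof of (iii) takes both of these for granted.
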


\begin{proof}
(i) Summing~\eqref{eq:generator} over $i$ and exchanging  indices gives
\[
\mathbf{1}^\top \dot x
= \sum_{i,k\neq i}\!\big(Q_{k\to i}(x)x_k - Q_{i\to k}(x)x_i\big)=0,
\]
hence $\mathbf{1}^\top x(t)$ is conserved along trajectories.

(ii) At any boundary point with $x_i=0$,
\[
\dot x_i
=\sum_{k\neq i}\big(Q_{k\to i}(x)x_k - Q_{i\to k}(x)x_i\big)
=\sum_{k\neq i} Q_{k\to i}(x)x_k \;\ge\; 0,
\]
since each term is nonnegative. 
Thus the vector field points inward or tangent to the nonnegative orthant,
and the set $\mathbb{R}^n_{\ge0}$ is forward-invariant.

(iii) Define $s_i(x):=\sum_{k\neq i}Q_{i\to k}(x)\ge0$.
As $Q_{i\to k}$ are continuous on the compact set $\Delta_n$,
there exists $\bar s<\infty$ such that $s_i(x)\le\bar s$ for all $x$.
Removing nonnegative inflow terms yields the differential inequality
\[
\dot x_i
=\sum_{k\neq i} Q_{k\to i}(x)x_k - s_i(x)x_i
\;\ge\; -\,s_i(x)x_i
\;\ge\; -\,\bar s\,x_i.
\]
By Gr\"onwall's inequality \cite{gronwall1919note},
$x_i(t)\ge x_i(0)e^{-\bar s t}>0$ for all finite $t\ge0$,
so trajectories starting in $\Delta_n^{\circ}$ remain strictly positive.
\end{proof}

\begin{remark}
Proposition~\ref{prop:positivity} ensures that the dynamics
are well posed on the open simplex $\Delta_n^{\circ}$, which is
forward-invariant and compact.
This guarantees that the entropy function
$V(x)=\sum_i x_i\log(x_i/v_i)$ is well defined and continuously differentiable
along all trajectories, forming the basis of the subsequent stability
and robustness analysis.
\end{remark}

\begin{remark}
(i) Uniform hypergraphs are recovered by fixing the order~$r$ in~\eqref{eq:Qikx}.
(ii) Directedness and weights correspond to the asymmetry and magnitude of
$a_{i k I}$. (iii) The representation~\eqref{eq:generator} will serve as a
bridge between the tensor formulation and the analytical tools for stability
and robustness developed below.
\end{remark}

\begin{remark}[Comparison with classical graph-based models]
\csx{The present framework differs from classical graph-based conservative dynamics primarily at the modeling level. In traditional graph-based systems, flows are generated directly by pairwise interactions associated with edges, so each transition channel depends only on two-node relations. In contrast, in the proposed hypergraph model each transition rate $Q_{k\to i}(x)$ is generated by tensor-based higher-order interaction terms, meaning that the flow from node $k$ to node $i$ may depend on the joint state of multiple other nodes. Such higher-order interaction mechanisms have also appeared in other hypergraph dynamical systems, including SIS spreading models \cite{cui2025sis} and higher-order Lotka--Volterra systems \cite{cui2025analysis}. As a result, the elementary interaction mechanism is no longer edge-based but group-based, which allows the model to represent collective effects that cannot be reduced to a superposition of pairwise flows. Therefore, the proposed dynamics are formulated in a genuinely higher-order manner, rather than as a direct edge-based extension of a classical graph model.}

\csx{It is also worth noting that certain special classes of higher-order Lotka--Volterra dynamics \cite{cui2025analysis} can be rewritten in the present generator-form framework. In general, higher-order Lotka--Volterra systems describe growth and inhibition effects and are not conservative, so they do not directly fit the simplex-constrained flow-conservation model studied here. However, consider the simplex-constrained replicator-type form with a conservative quantity 
$\dot x_i = x_i\bigl(f_i(x)-\bar f(x)\bigr), \qquad 
\bar f(x)=\sum_{\ell=1}^n x_\ell f_\ell(x), \qquad \sum_{\ell=1}^n x_\ell =1,$
where \(f_i(x)\) is a polynomial function generated by higher-order interactions. If we define, for \(k\neq i\),
$Q_{k\to i}(x):=x_i f_i(x),$
then the generator-form dynamics become
$\dot x_i
=
\sum_{k\neq i}\bigl(Q_{k\to i}(x)x_k-Q_{i\to k}(x)x_i\bigr)
=
\sum_{k\neq i}\bigl(x_i f_i(x)x_k-x_k f_k(x)x_i\bigr).$
Factoring out \(x_i\), we obtain
$\dot x_i
=
x_i\sum_{k\neq i}\bigl(f_i(x)x_k-f_k(x)x_k\bigr)
=
x_i\left(f_i(x)\sum_{k\neq i}x_k-\sum_{k\neq i}x_k f_k(x)\right).$
Using \(\sum_{k\neq i}x_k=1-x_i\) and
$\sum_{k\neq i}x_k f_k(x)=\bar f(x)-x_i f_i(x),$
it follows that
$\dot x_i
=
x_i\Bigl(f_i(x)(1-x_i)-(\bar f(x)-x_i f_i(x))\Bigr)
=
x_i\bigl(f_i(x)-\bar f(x)\bigr).$
Therefore, under suitable positivity conditions on the induced rate kernels, this replicator-type higher-order Lotka--Volterra dynamics can be embedded into the present generator-form model. This shows that the proposed framework includes a meaningful conservative subclass of higher-order Lotka--Volterra-type systems.}

\end{remark}

\subsection{Existence of an entropy-equilibrium structure}

We now establish structural conditions under which
system~\eqref{eq:generator} admits a unique equilibrium in~$\Delta_n$
and possesses an entropy-like Lyapunov function.

\begin{assumption}[{\color{blue}Tensor Generalized Detailed Balance}]\label{ass:gdb}
{\color{blue}
There exists a positive vector $v\in\Delta_n^{\circ}$ such that, for every order $r\in\{1,\dots,R\}$, every pair of distinct $i,k\in[n]$, and every multi-index $I$ associated with the $r$-th tensor (cf.~\eqref{eq:Qikx}),
\begin{equation}\label{eq:gdb-tensor}
v_k\, a^{(r)}_{i k I} = v_i\, a^{(r)}_{k i I}.
\end{equation}
Consequently, the induced rate kernels satisfy, for all $x\in\Delta_n^{\circ}$ and all $i\neq k$,
\begin{equation}\label{eq:gdb}
v_k\, Q_{k\to i}(x) = v_i\, Q_{i\to k}(x).
\end{equation}
}
\end{assumption}

{\color{blue}
Assumption~\ref{ass:gdb} imposes balance not only at the aggregated kernel level, but also at each individual tensor layer (uniform sub-hypergraph). More precisely, as required by~(8), for every order $r$, every pair of distinct nodes $i,k\in[n]$, and every multi-index $I$, the $v$-weighted coefficient of the transition $k\to i$ is exactly balanced by that of the reverse transition $i\to k$, that is \eqref{eq:gdb-tensor}. 
Hence, each uniform hypergraph component (corresponding to one fixed order $r$) is balanced separately, and summing these balanced contributions over all orders yields the aggregated relation \eqref{eq:gdb}
for every state $x$.
}

\begin{lemma}[Existence of equilibrium]\label{lem:existence}
If Assumption~\ref{ass:gdb} holds, then $v\in\Delta_n^{\circ}$
is an equilibrium of system~\eqref{eq:generator}, i.e.,
$f(v;\mathcal{A})=0$.
\end{lemma}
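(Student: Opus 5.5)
The plan is to evaluate the vector field of \eqref{eq:generator} at $x=v$ and show that every pairwise net flow vanishes, not just their sum. For each $i$ we have
\[
f_i(v;\mathcal{A})=\sum_{k\neq i}\Phi_{k\to i}(v;\mathcal{A})=\sum_{k\neq i}\big(Q_{k\to i}(v)\,v_k-Q_{i\to k}(v)\,v_i\big).
\]
It therefore suffices to prove $Q_{k\to i}(v)\,v_k=Q_{i\to k}(v)\,v_i$ for every ordered pair $i\neq k$. This is exactly the kernel-level balance \eqref{eq:gdb} evaluated at the point $x=v$.

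The one step that needs checking is that \eqref{eq:gdb} really holds at $x=v$. The assumption states it for all $x\in\Delta_n^{\circ}$, and $v\in\Delta_n^{\circ}$, so we could simply invoke it. To be self-contained, we will rederive it from the tensor-level condition \eqref{eq:gdb-tensor}. Multiply \eqref{eq:gdb-tensor} by the monomial $\prod_{j\in I}x_j$ and sum over $I\in\mathcal{I}_r$ and over $r=1,\dots,R$, using the definition \eqref{eq:Qikx}. This gives $v_kQ_{k\to i}(x)=v_iQ_{i\to k}(x)$ for every $x$.

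This summation is valid only if the index sets $\mathcal{I}_r$ used for the pair $(i,k)$ and for the reversed pair $(k,i)$ range over the same multi-indices $I$. That holds because the product $\prod_{j\in I}x_j$ depends only on $I$, and any coefficient missing from one side can be taken to be zero. With this convention the two sums match term by term.

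Setting $x=v$ then makes each summand $\Phi_{k\to i}(v;\mathcal{A})$ equal to zero. Hence $f_i(v;\mathcal{A})=0$ for all $i\in[n]$, and $v$ is an equilibrium. By Proposition~\ref{prop:positivity}(i), $v$ also lies on the invariant simplex, so it is an admissible equilibrium in $\Delta_n^{\circ}$.

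There is no real obstacle in this argument. The only point requiring care is the index-set bookkeeping described above.
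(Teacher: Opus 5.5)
Your argument is correct and is the same as the paper's: evaluating \eqref{eq:gdb} at $x=v$ makes every net flow $\Phi_{k\to i}(v;\mathcal{A})=Q_{k\to i}(v)v_k-Q_{i\to k}(v)v_i$ vanish, so $f_i(v;\mathcal{A})=0$ for every $i$. Rederiving \eqref{eq:gdb} from \eqref{eq:gdb-tensor} by zero-padding missing coefficients is a harmless elaboration of what Assumption~\ref{ass:gdb} already asserts, and the appeal to Proposition~\ref{prop:positivity}(i) is unnecessary because $v\in\Delta_n^{\circ}$ is already part of the assumption.
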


\begin{proof}
Substituting~\eqref{eq:gdb} into~\eqref{eq:generator} gives, for all~$i$,
\[
\sum_{k\neq i}\!\big(Q_{k\to i}(v)v_k - Q_{i\to k}(v)v_i\big)=0,
\]
hence $f(v;\mathcal{A})=0$.
\end{proof}

To establish uniqueness, consider the Kullback-Leibler divergence \cite{kullback1951kullback}
\begin{equation}\label{eq:V}
V(x):=D_{\mathrm{KL}}(x\Vert v)
=\sum_{i=1}^n x_i\log\!\frac{x_i}{v_i}, \qquad x\in\Delta_n,
\end{equation}
and differentiate it along trajectories of~\eqref{eq:generator}.
Using $\dot x_i=\sum_{k\neq i}\big(Q_{k\to i}(x)x_k - Q_{i\to k}(x)x_i\big)$, we obtain
\begin{align}
\dot V(x)
&= \sum_{i=1}^n \frac{\partial V}{\partial x_i}(x)\,\dot x_i
 = \sum_{i=1}^n \Big(\log\frac{x_i}{v_i}+1\Big)\dot x_i \notag\\
&= \sum_{i=1}^n \log\frac{x_i}{v_i}\,\dot x_i
\qquad\text{since } \mathbf{1}^\top\dot x=0. \label{eq:Vdot-step1}
\end{align}
Substituting the dynamics gives
\begin{align}
\dot V(x)
&= \sum_{i=1}^n \log\frac{x_i}{v_i}
   \sum_{k\neq i}\big(Q_{k\to i}(x)x_k - Q_{i\to k}(x)x_i\big) \notag\\
&= \sum_{i\neq k}
   \big(Q_{k\to i}(x)x_k - Q_{i\to k}(x)x_i\big)
   \log\frac{x_i}{v_i}. \label{eq:Vdot-step2}
\end{align}
In the first term of~\eqref{eq:Vdot-step2}, exchange the dummy indices
$(i,k)\mapsto(k,i)$, yielding
\begin{align}
\dot V(x)
&= \sum_{i\neq k}
   Q_{i\to k}(x)x_i
   \Big(\log\frac{x_k}{v_k}-\log\frac{x_i}{v_i}\Big). \notag
\end{align}

{\color{blue}
Let $y_i:=x_i/v_i$. Using $Q_{i\to k}(x)x_i = \big(v_i Q_{i\to k}(x)\big) y_i$ and the symmetry $v_i Q_{i\to k}(x)=v_k Q_{k\to i}(x)$ from Assumption~\ref{ass:gdb}, symmetrizing with respect to $(i,k)$ gives the entropy-dissipation form
\begin{equation}\label{eq:Vdot}
\begin{split}
\dot V(x)
&= -\,\frac{1}{2}\sum_{i\neq k}
   v_i Q_{i\to k}(x)\,
   \Big(\frac{x_i}{v_i}-\frac{x_k}{v_k}\Big)
   \Big(\log\frac{x_i}{v_i}-\log\frac{x_k}{v_k}\Big)\\
&= -\,\frac{1}{2}\sum_{i\neq k}
   v_i Q_{i\to k}(x)\,
   (y_i-y_k)\,(\log y_i-\log y_k)\;\le 0,
\end{split}
\end{equation}
where the inequality follows from $(y_i-y_k)(\log y_i-\log y_k)\ge 0$ for all $y_i,y_k>0$ and $v_i Q_{i\to k}(x)\ge 0$.
}
{\color{blue}
Moreover, $\dot V(x)=0$ holds if and only if $y_i=y_k$ for all pairs $(i,k)$ with $v_i Q_{i\to k}(x)>0$. Under Assumption~\ref{ass:gdb}, this is equivalent to the 
\eqref{eq:gdb} (indeed, $y_i=y_k \Leftrightarrow x_i/x_k=v_i/v_k \Leftrightarrow Q_{k\to i}(x)x_k=Q_{i\to k}(x)x_i$ using $Q_{k\to i}(x)/Q_{i\to k}(x)=v_i/v_k$ implied by \eqref{eq:gdb}).
}

Equality in~\eqref{eq:Vdot} holds if and only if \eqref{eq:gdb} holds
so that the time derivative of~$V$ vanishes precisely on the set
\[
\mathcal{Z}
:=\big\{x\in\Delta_n^\circ \,\big|\, \eqref{eq:gdb} \text{ holds}\big\}.
\]

\begin{assumption}[Uniqueness of detailed-balance solution]\label{ass:db-unique}
The equalities in~\eqref{eq:gdb} admit a unique solution
$x\in\Delta_n^{\circ}$, denoted by $v$; equivalently, $\mathcal{Z}=\{v\}$.
\end{assumption}

\begin{lemma}[Unique equilibrium and entropy structure]\label{lem:unique}
Under Assumptions~\ref{ass:gdb} and~\ref{ass:db-unique},
$v\in\Delta_n^{\circ}$ is the unique equilibrium of~\eqref{eq:generator}.
Moreover, $V(x)$ defined in~\eqref{eq:V} is a strict Lyapunov function.
\end{lemma}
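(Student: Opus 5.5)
The argument has two parts: uniqueness of the equilibrium, obtained from the dissipation identity \eqref{eq:Vdot} together with Assumption~\ref{ass:db-unique}, and the Lyapunov properties of $V$, obtained from Gibbs' inequality on the simplex.

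\emph{Uniqueness.} Let $x^\ast\in\Delta_n^\circ$ satisfy $f(x^\ast;\mathcal{A})=0$. Then $x^\ast$ is constant along its own trajectory, so
\[
\dot V(x^\ast)=\sum_i \log\frac{x^\ast_i}{v_i}\,f_i(x^\ast;\mathcal{A})=0 .
\]
By the characterization following \eqref{eq:Vdot}, $\dot V(x^\ast)=0$ means $y_i=y_k$ whenever $v_iQ_{i\to k}(x^\ast)>0$. In particular $Q_{k\to i}(x^\ast)x^\ast_k=Q_{i\to k}(x^\ast)x^\ast_i$ holds on every active pair, and on inactive pairs both flows vanish by \eqref{eq:gdb}. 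Hence $x^\ast\in\mathcal{Z}$, and Assumption~\ref{ass:db-unique} gives $x^\ast=v$.

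Existence is Lemma~\ref{lem:existence}. Equilibria on the boundary of $\Delta_n$ are not part of the claim, which concerns $\Delta_n^\circ$. Moreover, Proposition~\ref{prop:positivity}(iii) keeps interior trajectories inside $\Delta_n^\circ$, so boundary equilibria would not affect the interior analysis anyway.

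\emph{Lyapunov properties of $V$.} Apply Jensen's inequality to the concave function $\log$ with weights $x_i$:
\[
-V(x)=\sum_i x_i\log\frac{v_i}{x_i}\le \log\sum_i v_i=0 .
\]
Equality holds iff $x=v$, so $V(v)=0$ and $V(x)>0$ for $x\neq v$. $V$ is $C^1$ on $\Delta_n^\circ$ and extends continuously to $\Delta_n$.

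Combining \eqref{eq:Vdot} with the uniqueness argument above gives $\dot V(x)\le 0$, with $\dot V(x)=0$ iff $x\in\mathcal{Z}=\{v\}$. Thus $\dot V<0$ on $\Delta_n^\circ\setminus\{v\}$, and $V$ is a strict Lyapunov function on the invariant set $\Delta_n^\circ$.

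\emph{Main obstacle.} The delicate step is the equivalence $\dot V=0 \Leftrightarrow x\in\mathcal{Z}$ when some rates vanish. If $Q_{i\to k}(x)=0$, then $y_i$ and $y_k$ are unconstrained, but the pair $(i,k)$ still satisfies \eqref{eq:gdb} trivially. The paper's definition of $\mathcal{Z}$ concerns the pairwise flow equalities rather than $y_i=y_k$ for all pairs, so the implication goes through.

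I would nonetheless state explicitly that $\mathcal{Z}$ means "zero net flow on every pair," so that $\dot V(x)=0$ implies $x\in\mathcal{Z}$ without extra connectivity hypotheses. Assumption~\ref{ass:db-unique} then supplies exactly the irreducibility needed to force $x=v$.
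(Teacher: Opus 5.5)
Your proof is correct, but it takes a more direct route than the paper's.

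\textbf{The paper's route.} The paper argues dynamically. It identifies $\{\dot V=0\}$ with $\mathcal{Z}$ and notes that its largest invariant subset is $\{v\}$. It then applies LaSalle's invariance principle on $\Delta_n$ to get convergence of every trajectory to $v$, and only from that convergence infers that no other interior equilibrium exists. Positive definiteness of $V$ is left implicit.

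\textbf{Your route.} You argue pointwise. At an interior equilibrium $x^\ast$, the Lie derivative $\nabla V(x^\ast)^\top f(x^\ast;\mathcal{A})$ vanishes. Termwise nonnegativity in \eqref{eq:Vdot} then forces zero net flow on every pair, with Assumption~\ref{ass:gdb} killing both rates on inactive pairs. Assumption~\ref{ass:db-unique} then gives $x^\ast=v$. The same argument at an arbitrary $x$ yields $\dot V<0$ on $\Delta_n^\circ\setminus\{v\}$, and Gibbs' inequality supplies $V>0$ there.

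\textbf{What your route buys.} It needs no compact positively invariant set, and this is exactly where the paper's use of LaSalle is delicate:
\begin{itemize}
\item $\nabla V$ and the identity \eqref{eq:Vdot} exist only on the non-compact open simplex.
\item Proposition~\ref{prop:positivity}(iii) gives only $x_i(t)\ge x_i(0)e^{-\bar s t}$, not a uniform distance from $\partial\Delta_n$.
\item Equilibria can sit on $\partial\Delta_n$ when higher-order rates vanish there.
\end{itemize}
For the same reason, your side remark that boundary equilibria cannot affect the interior analysis is fine for the claims of this lemma. It would not carry over to a convergence argument, though. Your route also shows LaSalle is superfluous here, since $\{\dot V=0\}$ is already the singleton $\{v\}$.

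\textbf{What the paper's route buys.} The extra content it aims at is global convergence. That belongs to Lemma~\ref{lem:lyapunov}, so nothing claimed in this lemma is lost by your approach.

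\textbf{On the definition of $\mathcal{Z}$.} Your reading of $\mathcal{Z}$ as zero net flow on every pair is the intended one; compare the proof of Proposition~\ref{prop:db-unique-positivity}. Read literally, \eqref{eq:gdb} holds at every $x$ under Assumption~\ref{ass:gdb}, which would make $\mathcal{Z}$ the whole open simplex.
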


\begin{proof}
By Lemma~\ref{lem:existence}, $v$ is an equilibrium.
Since $\dot V(x)\le 0$ by~\eqref{eq:Vdot} and $\dot V(x)=0$ if and only if
$x\in\mathcal{Z}$, the largest invariant subset of $\{x:\dot V(x)=0\}$
is $\{v\}$ by Assumption~\ref{ass:db-unique}.
LaSalle's invariance principle on the positively invariant simplex~$\Delta_n$
implies that every trajectory converges to $v$, and thus no other equilibrium
exists in~$\Delta_n^{\circ}$.
\end{proof}

\subsection{Lyapunov stability and convergence rate}

The Lyapunov function~\eqref{eq:V} satisfies $\dot V(x)\le 0$ globally and
$\dot V(x)=0$ only at $x=v$. Therefore the pair $(f,V)$ defines an
\emph{entropy-dissipating structure} on the simplex.

\begin{lemma}[Global asymptotic stability]\label{lem:lyapunov}
Under Assumptions~\ref{ass:gdb}--\ref{ass:db-unique},
$V(x)$ is a strict Lyapunov function for~\eqref{eq:generator}, and
the equilibrium $x=v$ is globally asymptotically stable in~$\Delta_n$.
\end{lemma}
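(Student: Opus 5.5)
The plan is the standard Lyapunov argument with $V=D_{\mathrm{KL}}(\cdot\Vert v)$ from \eqref{eq:V}, working on the affine invariant set $\Delta_n$ (Proposition~\ref{prop:positivity}). The first step is to check that $V$ is a legitimate Lyapunov candidate relative to $v$ on $\Delta_n$. By Gibbs' inequality (Jensen applied to $-\log$), $V(x)\ge 0$ with equality iff $x=v$. $V$ is continuous on the compact set $\Delta_n$, using the convention $0\log 0=0$. On $\Delta_n^\circ$ it is $C^1$, and its Hessian restricted to $\mathbf{1}^\perp$ is $\mathrm{diag}(1/x_i)\succ 0$. Pinsker's inequality, $V(x)\ge \tfrac12\|x-v\|_1^2$, gives a class-$\mathcal{K}$ lower bound. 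Near $v$, a quadratic upper bound $V(x)\le \|x-v\|^2/\min_i v_i$ follows from $\log t\le t-1$. Together these yield positive definiteness and decrescence in a neighborhood of $v$.

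The second step is the dissipation inequality \eqref{eq:Vdot}: $\dot V\le 0$ everywhere on $\Delta_n^\circ$, with $\dot V=0$ exactly on $\mathcal{Z}=\{v\}$ by Assumption~\ref{ass:db-unique}. This makes $V$ a strict Lyapunov function. Lyapunov stability of $v$ follows directly: the sublevel sets $\{V\le c\}\cap\Delta_n$ are compact, forward invariant, contain $v$, and shrink to $\{v\}$ as $c\to 0$ by the lower bound above.

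The third step is global attractivity. For $x(0)\in\Delta_n^\circ$, the trajectory stays in $\Delta_n^\circ$ by Proposition~\ref{prop:positivity}(iii) and lies in the compact sublevel set $\{V\le V(x(0))\}$. LaSalle's invariance principle then says the $\omega$-limit set is invariant and contained in $\{\dot V=0\}$. That set equals $\{v\}$, provided the $\omega$-limit set lies in the interior. This reproduces the argument of Lemma~\ref{lem:unique}.

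The main obstacle is exactly that proviso: trajectories starting on the boundary, or $\omega$-limit points on $\partial\Delta_n$, where $\log$ blows up and \eqref{eq:Vdot} is not defined. My first attempt is to note that $V$ is bounded on $\Delta_n$ while $\dot V$ stays nonpositive up to the boundary. I would then show that the $\omega$-limit set cannot meet the boundary. The $\omega$-limit set is invariant and $V$ is constant on it, say equal to $c$. If $c>0$, the limit set avoids $v$, so by the invariance argument it would have to lie entirely in $\partial\Delta_n$. I would rule this out using Proposition~\ref{prop:positivity}(ii) together with a connectivity requirement: I would assume (or argue from Assumption~\ref{ass:db-unique}) that for any boundary point some $Q_{k\to i}(x)x_k>0$ with $x_i=0$. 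Then $\dot x_i>0$ strictly, the flow immediately enters the interior, and this contradicts invariance of a boundary-contained limit set; the same fact also handles boundary initial conditions. If this connectivity cannot be extracted from the assumptions, I would state global asymptotic stability with respect to the forward-invariant open simplex $\Delta_n^\circ$, which is the setting of the paper's Remark.
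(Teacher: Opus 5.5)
Your route is the paper's route: $V=D_{\mathrm{KL}}(\cdot\Vert v)$, the dissipation identity \eqref{eq:Vdot}, Assumption~\ref{ass:db-unique} to reduce $\{\dot V=0\}$ to $\{v\}$, and then LaSalle. The paper's proof is only a one-line LaSalle appeal ``on the positively invariant simplex''. It never addresses $\partial\Delta_n$, and its remark even calls $\Delta_n^\circ$ compact, so you are not missing an idea from the paper. Your added steps are sound:
\begin{itemize}
\item Gibbs and Pinsker give positive definiteness.
\item Your $\log t\le t-1$ bound is actually global: $V(x)\le\sum_i(x_i-v_i)^2/v_i\le\|x-v\|^2/v_{\min}$.
\item Stability follows from small sublevel sets.
\item Local attractivity holds whenever $V(x(0))<\min_{\partial\Delta_n}V$. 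Those sublevel sets are compact subsets of $\Delta_n^\circ$, so LaSalle is legitimate there.
\item The reduction is correct: an $\omega$-limit set on which $V$ is a positive constant must lie entirely in $\partial\Delta_n$.
\end{itemize}

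The genuine gap is exactly the one you flagged, and neither of your remedies closes it. The boundary-inflow property cannot be extracted from Assumption~\ref{ass:db-unique}, which constrains only interior points, while higher-order kernels may vanish on faces. Take $n=3$, only the layer $r=2$, and $a^{(2)}_{ikj}=1$ whenever $\{i,k,j\}=\{1,2,3\}$, so $Q_{k\to i}(x)=x_j$. TGDB holds with $v=\mathbf{1}/3$, and the interior detailed-balance equations $x_jx_k=x_jx_i$ force $x=v$; even the hypotheses of Proposition~\ref{prop:db-unique-positivity} hold. Yet every vertex, e.g.\ $(1,0,0)^\top$, is an equilibrium with zero inflow into its vanishing coordinates. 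So the lemma is false on the closed simplex of Proposition~\ref{prop:positivity}, and your boundary-inflow condition fails under the stated assumptions.

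Your fallback to $\Delta_n^\circ$ merely discards boundary initial data. It does not stop an interior trajectory from having its $\omega$-limit set in $\partial\Delta_n$, which is the actual missing step. In the example above, interior trajectories still converge because $V<\log 3=V(\text{vertex})$ on $\Delta_n^\circ$, but that relies on special structure. Writing $Q_{k\to i}(x)x_k=\sum_{r,I}a^{(r)}_{ikI}x_k\prod_{j\in I}x_j$ shows the model is a detailed-balanced mass-action network with reactions $X_k+\sum_{j\in I}X_j\rightleftharpoons X_i+\sum_{j\in I}X_j$, and $V$ is its Horn--Jackson function. Excluding boundary $\omega$-limit points under Assumptions~\ref{ass:gdb}--\ref{ass:db-unique} alone is therefore the persistence question behind the Global Attractor Conjecture, not a consequence of LaSalle.

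A complete proof is available if you add the uniform positivity hypothesis of Proposition~\ref{prop:quadratic-dissipation}. With its constant $c$ and your global upper bound, you get $\dot V\le -c\,v_{\min}V$ on $\Delta_n^\circ$. Hence $V(x(t))\le e^{-c v_{\min}t}V(x(0))$, and Pinsker gives global exponential convergence with no LaSalle at all. By continuity, the bound $Q_{k\to i}\ge\underline q$ on the connected support graph also yields your boundary-inflow property. Since positive coordinates stay positive, iterating it shows that boundary initial data enter $\Delta_n^\circ$ instantly.
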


\begin{proof}
The result follows directly from~\eqref{eq:Vdot} and
Lemma~\ref{lem:unique} via LaSalle's invariance principle.
\end{proof}

\begin{remark}[Local convergence rate]\label{rem:jac}
The Jacobian $J=\partial f/\partial x|_{x=v}$ satisfies
$\mathbf{1}^\top J=0$ due to mass conservation.
On the tangent subspace $\mathbf{1}^\perp=\{z:\mathbf{1}^\top z=0\}$,
$J$ is Hurwitz under irreducibility of the interaction structure.
The smallest nonzero magnitude of $\mathrm{Re}(\lambda(J))$ defines the
\emph{spectral gap}, which determines the local exponential convergence rate
and plays a key role in the robustness bounds established later.
\end{remark}

\paragraph*{Connectivity of the support graph.}
For a nonnegative matrix $S\in\mathbb{R}_{\ge0}^{n\times n}$ with zero diagonal,
we define its \emph{support graph} as the undirected graph
$\mathcal{G}_S=([n],\mathcal{E}_S)$ whose edge set is
\[
\mathcal{E}_S := \big\{ \{i,k\}\subseteq [n] \;|\; S_{ik}>0\ \text{or}\ S_{ki}>0 \big\}.
\]
We say that $S$ has a \emph{connected support graph} if $\mathcal{G}_S$ is
connected, i.e., every pair of nodes is linked by a path of positive-weight
entries in $S$. \csx{This construction can be viewed as a graph projection of the underlying hypergraph onto an ordinary pairwise graph. Here, however, we only retain the support information and do not assign exact projected edge weights, since for the present purpose we only need to characterize connectivity.}

\begin{proposition}
\label{prop:db-unique-positivity}
Consider the generator-form dynamics \eqref{eq:generator} on the simplex $\Delta_n$ with mass conservation $\mathbf{1}^\top x=1$. 
Assume:

\begin{enumerate}
\item[(a)] \textbf{Positivity:} For every $x\in\Delta_n^\circ$ and every present interaction, $\csx{Q}_{k\to i}(x)>0$.
\item[(b)] \textbf{Irreducibility (strongly connected support graph):} The directed support graph $\mathcal{G}(x)$ induced by $\{\csx{Q}_{k\to i}(x)>0\}$ is strongly connected for all $x\in\Delta_n^\circ$.
\item[(c)] \csx{\textbf{Tensor Generalized Detailed Balance (TGDB):} 
$\pi_k\, a^{(r)}_{i k I} = \pi_i\, a^{(r)}_{k i I}.$
}
\end{enumerate}
Then the detailed-balance equalities
admit a unique solution in $\Delta_n^\circ$, namely
$x \;=\; v \;=\; \frac{\pi}{\mathbf{1}^\top \pi }$.
Consequently, Assumption~\ref{ass:db-unique} holds.
\end{proposition}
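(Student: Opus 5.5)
Under the tensor generalized detailed balance (c), every layer satisfies $\pi_k a^{(r)}_{ikI}=\pi_i a^{(r)}_{kiI}$. Summing over $r$ and $I$ with weights $\prod_{j\in I}x_j$ gives, for every $x\in\Delta_n^\circ$ and $i\neq k$,
\[
\pi_k Q_{k\to i}(x)=\pi_i Q_{i\to k}(x).
\]
The same identity holds with $v=\pi/(\mathbf{1}^\top\pi)$, since the relation is homogeneous in $\pi$. So I would first record this kernel-level balance. Because it is homogeneous, normalizing $\pi$ does not affect it, and $v\in\Delta_n^\circ$ because $\pi>0$.

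I read the ``detailed-balance equalities'' that characterize $\mathcal{Z}$ as the pairwise flux-balance relations
\[
Q_{k\to i}(x)\,x_k=Q_{i\to k}(x)\,x_i .
\]
These are exactly the vanishing conditions for $\dot V$ derived from \eqref{eq:Vdot}.

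\textbf{Existence.} At $x=v$ we have
\[
Q_{k\to i}(v)v_k=\frac{v_iQ_{i\to k}(v)}{v_k}\,v_k=Q_{i\to k}(v)v_i,
\]
so $v$ solves every equality. This is the same computation as in Lemma~\ref{lem:existence}.

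\textbf{Uniqueness.} Take any $x\in\Delta_n^\circ$ satisfying the equalities and set $y_i=x_i/v_i>0$.

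For a pair $(i,k)$ that is an edge of the support graph $\mathcal{G}(x)$, assume $Q_{k\to i}(x)>0$. By assumption (a) and the balance identity, the reverse rate $Q_{i\to k}(x)=v_kQ_{k\to i}(x)/v_i$ is also positive. Dividing the flux equality by $Q_{i\to k}(x)$ gives
\[
\frac{x_k}{x_i}=\frac{Q_{i\to k}(x)}{Q_{k\to i}(x)}=\frac{v_k}{v_i},
\]
that is, $y_i=y_k$.

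By irreducibility (b), any two nodes are joined by a directed path in $\mathcal{G}(x)$. Propagating the equality $y_i=y_k$ along that path shows all $y_i$ equal a common constant $c>0$, so $x=cv$. The constraint $\mathbf{1}^\top x=1=\mathbf{1}^\top v$ then forces $c=1$ and hence $x=v$. Therefore $\mathcal{Z}=\{v\}$, which is Assumption~\ref{ass:db-unique}.

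\textbf{Main obstacle.} The subtle step is that the support graph $\mathcal{G}(x)$ depends on $x$, so connectivity must be used at the particular solution $x$ rather than at $v$. Hypothesis (b) is stated for all $x\in\Delta_n^\circ$, and (a) together with the kernel balance guarantees that every edge used is bidirectional with both rates positive. This is what makes dividing by the rates legitimate.

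I would also remark that strong connectivity is more than needed: connectivity of the undirected support graph suffices. I would note this only briefly.
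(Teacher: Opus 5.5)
Your proof is correct and follows essentially the same route as the paper: fix a solution of the flux-balance equalities, use the kernel-level balance to get $x_i/x_k=\pi_i/\pi_k$ on each edge of $\mathcal{G}(x)$, propagate this along paths using connectivity, and normalize with $\mathbf{1}^\top x=1$. You also make explicit two points the paper leaves implicit: the kernel identity follows from (c) by summing over $r$ and $I$, and balance with $\pi>0$ forces every edge of $\mathcal{G}(x)$ to be bidirectional, which justifies dividing by the rates and is why undirected connectivity suffices.
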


\begin{proof}
Fix $x\in\Delta_n^\circ$ satisfying $\csx{Q}_{k\to i}(x)\,x_k=\csx{Q}_{i\to k}(x)\,x_i$ for all $i\neq k$. 
For any edge $(i,k)$ with positive rates in both directions, it follows that
$\frac{x_i}{x_k}
\;=\;
\frac{\csx{Q}_{k\to i}(x)}{\csx{Q}_{i\to k}(x)}{=}
\frac{\pi_i}{\pi_k}.$
Since the directed support graph is strongly connected, by traversing along directed paths we obtain 
$\frac{x_i}{x_k}=\frac{\pi_i}{\pi_k}$ for all node pairs $(i,k)$, hence $x=\alpha\,\pi$ for some $\alpha>0$. 
Using the mass constraint $\mathbf{1}^\top x=1$, we get $\alpha = 1/(\mathbf{1}^\top \pi)$ and therefore $x=v=\pi/(\mathbf{1}^\top \pi)$. 
As $x\in\Delta_n^\circ$, $v\in\Delta_n^\circ$ is indeed admissible. 
Uniqueness follows by the same ratio argument: any other solution must satisfy $x=\alpha\,\pi$ and thus coincide with $v$ after normalization.
\end{proof}

\begin{proposition}[Quadratic entropy dissipation]
\label{prop:quadratic-dissipation}
\csx{Under the assumptions of Proposition~\ref{prop:db-unique-positivity}, suppose in addition that there exists a constant $\underline q>0$ such that
$Q_{k\to i}(x)\ge \underline q$
for all $x\in\Delta_n^\circ$ and all pairs $(i,k)$ belonging to the support graph in Proposition~\ref{prop:db-unique-positivity}. Then there exists a constant $c>0$ such that
$\dot V(x)\le -c\|x-v\|^2,
\qquad \forall x\in\Delta_n^\circ .$
More precisely, one may take
$c=\frac{v_{\min}^2\,\underline q\,\lambda_\star}{2\,v_{\max}^2},$
where
$v_{\min}:=\min_i v_i,\qquad
v_{\max}:=\max_i v_i,$
and
$\lambda_\star:=
\inf_{\substack{z\neq 0\\ v^\top z=0}}
\frac{\sum_{\{i,k\}\in\mathcal E_\star}(z_i-z_k)^2}{\|z\|^2}>0,$
with $\mathcal E_\star$ denoting the edge set of the connected support graph in Proposition~\ref{prop:db-unique-positivity}.}
\end{proposition}

\begin{proof}
\csx{By Assumption~\ref{ass:gdb}, 
\[
\dot V(x)
=
-\frac12\sum_{i\neq k} w_{ik}(x)
\left(\frac{x_i}{v_i}-\frac{x_k}{v_k}\right)
\left(\log\frac{x_i}{v_i}-\log\frac{x_k}{v_k}\right).
\]
Let again
$y_i:=\frac{x_i}{v_i},\qquad z:=y-\mathbf 1 .$
Since $x,v\in\Delta_n^\circ$, one has
$\sum_{i=1}^n v_i y_i=\sum_{i=1}^n x_i=1,$
hence
$v^\top z=0.$
}

\csx{Next, for any $a,b>0$, the inequality
$(a-b)(\log a-\log b)\ge \frac{2(a-b)^2}{a+b}$
holds. Since $x_i\le 1$ and $v_i\ge v_{\min}>0$, we have
$y_i=\frac{x_i}{v_i}\le \frac{1}{v_i}\le \frac{1}{v_{\min}},$
and therefore
$y_i+y_k\le \frac{2}{v_{\min}}.$
It follows that
$(y_i-y_k)(\log y_i-\log y_k)\ge v_{\min}(y_i-y_k)^2.$
Substituting this into the dissipation identity yields
$\dot V(x)
\le
-\frac{v_{\min}}{2}\sum_{i\neq k} w_{ik}(x)(y_i-y_k)^2.$
}
\csx{Now, by the additional uniform positivity assumption and the definition of $w_{ik}(x)$,
$w_{ik}(x)=v_iQ_{i\to k}(x)\ge v_{\min}\underline q$
for all edges $\{i,k\}\in\mathcal E_\star$. Hence
$\dot V(x)
\le
-\frac{v_{\min}^2\underline q}{2}
\sum_{\{i,k\}\in\mathcal E_\star}(y_i-y_k)^2.$}

\csx{By Proposition~\ref{prop:db-unique-positivity}, the graph $\mathcal G_\star=([n],\mathcal E_\star)$ is connected. Therefore, the quadratic form
$\sum_{\{i,k\}\in\mathcal E_\star}(z_i-z_k)^2$
is positive definite on the subspace $\{z\in\mathbb R^n: v^\top z=0\}$, and thus
$\sum_{\{i,k\}\in\mathcal E_\star}(z_i-z_k)^2
\ge
\lambda_\star \|z\|^2,
\quad
\forall z\in\mathbb R^n \text{ with } v^\top z=0.$
Since $y_i-y_k=z_i-z_k$, we obtain
$\dot V(x)\le
-\frac{v_{\min}^2\underline q\,\lambda_\star}{2}\|z\|^2.$}

\csx{Finally, because
$x-v=\operatorname{diag}(v)\,z,$
we have
$\|x-v\|^2
=
\sum_i v_i^2 z_i^2
\le
v_{\max}^2\|z\|^2,$
that is,
$\|z\|^2\ge \frac{1}{v_{\max}^2}\|x-v\|^2.$
Combining the above estimates gives
$\dot V(x)
\le
-\frac{v_{\min}^2\underline q\,\lambda_\star}{2v_{\max}^2}\|x-v\|^2.$
This proves the claim.}
\end{proof}

Now, we want to focus on the following special case of a uniform distribution of the resource.

\begin{proposition}\label{prop:sufficient}
Suppose $Q_{k\to i}(x)=S_{k i}\,\phi(x)$ with $\phi(x)>0$ smooth on
$\Delta_n^{\circ}$, and $S=S^\top\!\ge 0$ has zero diagonal and a connected
support graph. Then Assumptions~\ref{ass:gdb}--\ref{ass:db-unique} hold
with $v=\mathbf{1}/n$.
\end{proposition}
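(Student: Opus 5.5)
The plan is to check the two assumptions directly with $v=\mathbf{1}/n$, treating the detailed-balance condition at the level of the induced kernels~\eqref{eq:gdb}. With $Q_{k\to i}(x)=S_{ki}\phi(x)$ and $v_i=v_k=1/n$, condition~\eqref{eq:gdb} reads $\tfrac1n S_{ki}\phi(x)=\tfrac1n S_{ik}\phi(x)$. This holds for every $x\in\Delta_n^\circ$ because $S=S^\top$. Clearly $v=\mathbf{1}/n\in\Delta_n^\circ$.

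If one insists on the tensor-level form~\eqref{eq:gdb-tensor}, I would remark that the factorized kernel can be realized by coefficients $a^{(r)}_{ikI}=S_{ki}c^{(r)}_I$, where $\phi(x)=\sum_r\sum_I c^{(r)}_I\prod_{j\in I}x_j$. Such coefficients are symmetric in $(i,k)$, so~\eqref{eq:gdb-tensor} holds with equal weights. For a general smooth $\phi$, only the kernel-level identity~\eqref{eq:gdb} is used in Lemma~\ref{lem:existence} and in the dissipation identity~\eqref{eq:Vdot}, so that identity suffices.

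For Assumption~\ref{ass:db-unique}, take $x\in\Delta_n^\circ$ with $Q_{k\to i}(x)x_k=Q_{i\to k}(x)x_i$ for all $i\neq k$. This means $S_{ki}\phi(x)(x_k-x_i)=0$, using symmetry again. Since $\phi(x)>0$, every edge $\{i,k\}$ of the support graph (where $S_{ik}=S_{ki}>0$) forces $x_i=x_k$. Connectivity of the support graph then propagates equality along paths, so all $x_i$ are equal. The constraint $\mathbf{1}^\top x=1$ gives $x=\mathbf{1}/n$, hence $\mathcal Z=\{\mathbf{1}/n\}$.

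This is essentially Proposition~\ref{prop:db-unique-positivity} specialized to $\pi=\mathbf{1}$. Symmetry of $S$ makes the directed support graph the same as the undirected one, so connectivity plays the role of strong connectivity. I would cite that proposition as an alternative.

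The only delicate point is the mismatch between the tensor-level Assumption~\ref{ass:gdb} and an arbitrary smooth $\phi$, which need not be a polynomial of the form~\eqref{eq:Qikx}. I would handle this as above: verify~\eqref{eq:gdb}, which is the property actually invoked, and note the tensor realization in the polynomial case. Apart from that, the steps are routine.
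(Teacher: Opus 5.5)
Your proof is correct and follows the paper's argument essentially verbatim. You verify the kernel-level identity \eqref{eq:gdb} with $v=\mathbf{1}/n$ using $S=S^\top$. You then use $\phi>0$ to force $x_i=x_k$ on every edge of the support graph, and propagate this equality via connectivity and $\mathbf{1}^\top x=1$.

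Your extra remark on realizing the tensor-level condition \eqref{eq:gdb-tensor} via $a^{(r)}_{ikI}=S_{ki}c^{(r)}_I$ is a sensible refinement that the paper omits, since the paper checks only \eqref{eq:gdb}. That realization needs $c^{(r)}_I\ge 0$ to keep the coefficients nonnegative.
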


\begin{proof}
Assume $Q_{k\to i}(x)=S_{k i}\,\phi(x)$ with $\phi(x)>0$ on $\Delta_n^{\circ}$ and
$S=S^\top\!\ge 0$ having zero diagonal and a connected support graph.

{\color{blue}
\emph{(i) TGDB for all $x$ with $v=\mathbf{1}/n$.}
Let $v=\mathbf{1}/n$. Then for any $x\in\Delta_n^\circ$ and any $i\neq k$,
\[
v_k\, Q_{k\to i}(x)
= \frac{1}{n}\,S_{k i}\,\phi(x)
= \frac{1}{n}\,S_{i k}\,\phi(x)
= v_i\, Q_{i\to k}(x),
\]
where we used $S_{k i}=S_{i k}$. Hence Assumption~\ref{ass:gdb} holds.
}

\emph{(ii) Uniqueness of the detailed-balance solution.}
The detailed-balance equalities \eqref{eq:gdb} can deduce that
$Q_{k\to i}(x)\,x_k = Q_{i\to k}(x)\,x_i
\;\Longleftrightarrow\; 
S_{k i}\,\phi(x)\,x_k = S_{i k}\,\phi(x)\,x_i
\;\Longleftrightarrow\;
S_{k i}\,x_k = S_{i k}\,x_i.$
By symmetry of $S$, this is equivalent to
$S_{i k}\,(x_i - x_k)=0 \qquad \forall\, i\neq k.$
Hence, for every edge $\{i,k\}$ with $S_{i k}>0$, we must have $x_i=x_k$.
Because the support graph of $S$ is connected, this implies $x_i=\theta$ for all
$i\in[n]$. Using the simplex constraint $\mathbf{1}^\top x=1$ yields
$\theta=1/n$, i.e., $x=\mathbf{1}/n$. Since $\theta>0$, we have
$x\in\Delta_n^{\circ}$. Therefore the solution set of
\eqref{eq:gdb} in $\Delta_n^{\circ}$ is the singleton $\{v\}$, and
Assumption~\ref{ass:db-unique} holds.

Combining (i) and (ii) proves the proposition.
\end{proof}

\begin{remark}[Role of the hypergraph structure]
\csx{The above results highlight that the hypergraph structure plays a genuinely nontrivial role in shaping the equilibrium and stability properties of the system. First, for a non-uniform hypergraph, the interaction structure can be decomposed into several uniform sub-hypergraphs, each corresponding to one tensor layer or one interaction order. Under the TGDB condition, these different layers are not balanced independently around unrelated reference states; rather, they are all required to be compatible with the same vector $v$. In this sense, each uniform sub-hypergraph contributes a local balance relation, and the fact that all layers share the same $v$ expresses a form of cross-order consistency of the higher-order interactions.
Second, the existence and uniqueness of the global equilibrium are determined not by the connectivity of each individual uniform sub-hypergraph, but by the connectivity of the aggregated interaction skeleton, represented here through the overall graph projection (or support graph). This means that the full system may still admit a unique globally asymptotically stable equilibrium even if some individual layers are disconnected, provided that the different layers jointly produce a connected global interaction structure. Therefore, connectivity requirements are effectively relaxed in the hypergraph setting: it is the union of the higher-order interaction channels that matters, rather than the connectivity of each order separately. Taken together, these observations show how local layerwise consistency gives rise to a global dynamical feature. The common equilibrium vector $v$ encodes a shared balance pattern across different interaction orders, while the aggregated graph projection guarantees that this pattern propagates throughout the entire system. Hence, the hypergraph framework does not merely add higher-order terms to an otherwise graph-based dynamics; instead, it allows different orders of interaction to cooperate in producing a coherent global equilibrium and stability structure. From this viewpoint, the results reveal a characteristic feature of hypergraph dynamical systems: global stability may emerge from the cooperation of several individually incomplete layers. A given uniform sub-hypergraph may be insufficient, by itself, to enforce a global behavior, but once combined with other layers sharing the same balanced state, it can become part of an integrated higher-order architecture that does. This illustrates precisely how local structural commonality across layers can generate a global dynamical signature.}
\end{remark}

\begin{remark}[Nonuniqueness example]
\csx{Consider $n=3,r=1$ and define
$Q_{2\to1}(x)=Q_{1\to2}(x)=1,\quad
Q_{k\to i}(x)=0 \ \text{for all other } i\neq k.$
Let
$v=\Big(\tfrac13,\tfrac13,\tfrac13\Big)\in\Delta_3^\circ.$
Then
$v_kQ_{k\to i}(x)=v_iQ_{i\to k}(x),\qquad \forall\, i\neq k,$
so Assumption~\ref{ass:gdb} holds. However, the generator dynamics reduce to
$\dot x_1=x_2-x_1,\qquad
\dot x_2=x_1-x_2,\qquad
\dot x_3=0.$
Hence every point of the form
$x^\ast=(a,a,1-2a),\qquad 0<a<\tfrac12,$
is an equilibrium in $\Delta_3^\circ$. Therefore the detailed-balance equations admit infinitely many solutions, and Assumption~\ref{ass:db-unique} fails. This shows that the GDB condition alone does not guarantee uniqueness of the equilibrium; additional connectivity-type assumptions, such as those in Proposition~\ref{prop:db-unique-positivity}, are essential.}
\end{remark}

\section{Robustness to Topological Perturbations: Sensitivity and Local ISS Bounds}

\subsection{Perturbed dynamics and definitions}

Having established the entropy-dissipating Lyapunov-like function of the unperturbed system,
we now examine how this structure behaves under small perturbations of the
coupling tensor and external inputs.

\paragraph{Perturbed tensor system}
Consider the tensor-coupled dynamics
\begin{equation}\label{eq:perturbed-system}
    \dot x = f(x;\mathcal{A}+\Delta\mathcal{A}) + w(t),
\end{equation}
where $\Delta\mathcal{A}$ represents a structural (topological) perturbation of
the coupling tensor $\mathcal{A}$, and $w(t)\in\mathbb{R}^n$ denotes an additive
input or modeling error. The perturbation $\Delta\mathcal{A}$ may capture small
changes in hyperedge weights, directions, or membership relations, while
$w(t)$ accounts for external noise, approximation errors, or neglected dynamics.

We assume throughout that $\mathbf{1}^\top w(t)=0$ so that the total mass is
preserved, ensuring $\mathbf{1}^\top \dot x = 0$ and $x(t)\in\Delta_n$ for all~$t$.

\paragraph{Perturbed equilibrium}
Let $v$ be the unique equilibrium of the nominal system $f(x;\mathcal{A})=0$
under the generalized detailed balance (GDB) condition.
The perturbed system~\eqref{eq:perturbed-system} admits an equilibrium
$\tilde v$ satisfying
\begin{equation}\label{eq:perturbed-equilibrium}
    f(\tilde v;\mathcal{A}+\Delta\mathcal{A}) = 0.
\end{equation}
When $\Delta\mathcal{A}=0$ and $w(t)\equiv 0$, one recovers $\tilde v=v$.

\paragraph{Problem formulation}
The goal of this section is twofold:
\begin{enumerate}
    \item[(i)] to quantify how the equilibrium point shifts under small
    topological perturbations $\Delta\mathcal{A}$, i.e.,
    to establish bounds on $\|\tilde v-v\|$ (static robustness);
    \item[(ii)] to analyze how the trajectories of
    system~\eqref{eq:perturbed-system} deviate from the nominal equilibrium~$v$
    in the presence of both $\Delta\mathcal{A}$ and $w(t)$, using the
    Kullback--Leibler Lyapunov function~$V(x)=D_{\mathrm{KL}}(x\Vert v)$
    (dynamic robustness).
\end{enumerate}

\paragraph{Notation and assumptions}
Throughout this section, we use the following conventions:
\begin{itemize}
    \item $J := \frac{\partial f}{\partial x}(v;\mathcal{A})$
    denotes the Jacobian of the nominal system at~$v$.
    Under the GDB condition, $J$ is Hurwitz on $\mathbf{1}^\perp$
    with spectral gap $c_{\textbf{gap}}>0$ as defined in
    Remark~\ref{rem:jacobian-gap}.
    \item $\partial_{\mathcal{A}}f(v;\mathcal{A})[\Delta\mathcal{A}]$
    denotes the Fr\'echet derivative of~$f$ with respect to~$\mathcal{A}$,
    evaluated at~$(v,\mathcal{A})$ and applied to~$\Delta\mathcal{A}$.
    \item The perturbation amplitude $\|\Delta\mathcal{A}\|$ is assumed
    sufficiently small so that all linearizations and higher-order terms
    $\mathcal{O}(\|\Delta\mathcal{A}\|^2)$ are well defined.
\end{itemize}
The constant $c>0$ appearing below in the quadratic entropy-dissipation and local ISS
bounds denotes the nonlinear decay rate furnished by
Proposition~\ref{prop:quadratic-dissipation}. The spectral gap
$c_{\textbf{gap}}$ is the corresponding local linear contraction rate of
$J|_{\mathbf{1}^{\perp}}$. They are directly linked, and coincide in the
linearized symmetric setting.

The subsequent subsections derive (i) an explicit first-order sensitivity bound
for $\|\tilde v-v\|$ and (ii) local input-to-state stability inequalities that
quantify the dissipativity of the perturbed system under~$V(x)$.

\subsection{Equilibrium sensitivity analysis}

We first analyze how the equilibrium point changes when the coupling tensor
$\mathcal{A}$ is slightly perturbed. The following result quantifies the
first-order variation of the equilibrium under the generalized detailed
balance (GDB) structure.

\begin{proposition}[First-order equilibrium sensitivity]\label{prop:sensitivity}
Let $v\in\Delta_n^\circ$ be the nominal equilibrium of the generator-form
dynamics $\,\dot x=f(x;\mathcal{A})\,$ under Assumption~\ref{ass:gdb}.
Write $J:=\frac{\partial f}{\partial x}(v;\mathcal{A})$ and let
$J|_{\mathbf{1}^{\perp}}$ denote its restriction to the tangent subspace
$\mathbf{1}^{\perp} = \{ z \in \mathbb{R}^n \mid \mathbf{1}^{\top} z = 0 \}$,
where $J|_{\mathbf{1}^{\perp}}$ is invertible (Hurwitz) by Lyapunov argument.
Consider a small perturbation $\Delta\mathcal{A}$ of the coupling tensor and
let $\tilde v\in\Delta_n^\circ$ be the perturbed equilibrium satisfying
$f(\tilde v;\mathcal{A}+\Delta\mathcal{A})=0$.
Then, with the orthogonal projector $P:=I-\frac{1}{n}\mathbf{1}\mathbf{1}^\top$,
\begin{equation}\label{eq:sens-firstorder}
\tilde v - v
= -\,\big(J|_{\mathbf{1}^{\perp}}\big)^{-1}\,
   P\,\partial_{\mathcal{A}} f(v;\mathcal{A})[\Delta\mathcal{A}]
\;+\; \mathcal{O}(\|\Delta\mathcal{A}\|^2),
\end{equation}
where $\partial_{\mathcal{A}} f(v;\mathcal{A})[\cdot]$ is the Fr\'echet
derivative of $f$ w.r.t.\ $\mathcal{A}$ at $(v,\mathcal{A})$.

Consequently, there exists $L_{\mathcal{A}}>0$ such that
\begin{equation}\label{eq:sens-norm}
\|\tilde v - v\|
\;\le\;
\big\|\big(J|_{\mathbf{1}^{\perp}}\big)^{-1}\big\|\;
L_{\mathcal{A}}\;\|\Delta\mathcal{A}\|
\;+\; \mathcal{O}(\|\Delta\mathcal{A}\|^2),
\end{equation}
i.e., the equilibrium displacement scales linearly with the perturbation
magnitude, with gain set by the resolvent norm of the Jacobian on
$\mathbf{1}^{\perp}$.
\end{proposition}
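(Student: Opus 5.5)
The plan is an implicit function theorem argument on the affine simplex, with the conservation law used to remove the $\mathbf{1}$-direction. Parametrize the state as $x=v+z$ with $z\in\mathbf{1}^\perp$. Since $\mathbf{1}^\top f(x;\mathcal{B})=0$ for every coupling tensor $\mathcal{B}$ (Proposition~\ref{prop:positivity}(i), which uses only the generator structure and not the TGDB condition), the map
\[
F(z,\Delta\mathcal{A}) := f(v+z;\mathcal{A}+\Delta\mathcal{A})
\]
takes values in $\mathbf{1}^\perp$, so $F=PF$. The kernels $Q_{k\to i}$ in \eqref{eq:Qikx} are polynomial in $x$ and linear in the coefficients $a^r_{ikI}$. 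Hence $F$ is $C^\infty$ on a neighbourhood of $(0,0)$, and $F(0,0)=0$ by Lemma~\ref{lem:existence}.

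Next, the partial derivative $\partial_z F(0,0)$ is $J$ restricted to $\mathbf{1}^\perp$, viewed as a map $\mathbf{1}^\perp\to\mathbf{1}^\perp$. This works because $\mathbf{1}^\top J=0$ (Remark~\ref{rem:jac}), so $J$ maps $\mathbf{1}^\perp$ into itself. This restriction must be invertible, and that is the main obstacle. The statement simply asserts it "by Lyapunov argument", so I would justify it concretely under the hypotheses of Proposition~\ref{prop:quadratic-dissipation}.

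For the invertibility I would work with the Hessian $H=\nabla^2V(v)=\mathrm{diag}(1/v)$. From $\dot V\le -c\|x-v\|^2$ and a second-order expansion of $\dot V(v+z)=\nabla V(v+z)^\top f(v+z)$, one obtains
\[
z^\top H J z \le -c\|z\|^2 \qquad \text{for all } z\in\mathbf{1}^\perp .
\]
Indeed, the zeroth- and first-order terms vanish since $\nabla V(v)^\top J z=(\log(v/v)+1)^\top Jz=\mathbf{1}^\top Jz=0$, and terms involving $f(v)$ vanish as well. This inequality shows $J|_{\mathbf{1}^\perp}$ has trivial kernel, hence is invertible. It is moreover Hurwitz, since $J$ is dissipative in the $H$-weighted inner product on $\mathbf{1}^\perp$. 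Alternatively, invertibility follows from the symmetrization $H J$ being a negative weighted Laplacian on the connected support graph.

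With invertibility in hand, the implicit function theorem gives a unique smooth $z(\Delta\mathcal{A})$ near $0$ with $F(z(\Delta\mathcal{A}),\Delta\mathcal{A})=0$. Differentiating the identity at $0$ yields
\[
J|_{\mathbf{1}^\perp}\, Dz[\Delta\mathcal{A}] + P\,\partial_{\mathcal{A}}f(v;\mathcal{A})[\Delta\mathcal{A}]=0 .
\]
Here $P$ can be inserted freely because $\partial_{\mathcal{A}}f$ also lies in $\mathbf{1}^\perp$. Taylor's theorem with $C^2$ remainder then gives \eqref{eq:sens-firstorder}. Local uniqueness identifies $\tilde v=v+z(\Delta\mathcal{A})$; globally, $\tilde v$ is also unique in $\Delta_n^\circ$ whenever the perturbed tensor still satisfies the hypotheses of Lemma~\ref{lem:unique}.

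Finally, $f$ is linear in $\mathcal{A}$, so $\partial_{\mathcal{A}}f(v;\mathcal{A})[\Delta\mathcal{A}]=f(v;\Delta\mathcal{A})$. This is bounded by
\[
L_{\mathcal{A}}\|\Delta\mathcal{A}\|, \qquad L_{\mathcal{A}}=2\max_i\big(\textstyle\sum_{k,r,I}\prod_{j\in I}v_j^2\big)^{1/2}\sqrt n ,
\]
which follows from Cauchy--Schwarz, $v_k\le1$, and $\|P\|=1$. Taking norms in \eqref{eq:sens-firstorder} gives \eqref{eq:sens-norm}.
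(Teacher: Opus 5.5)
Your proposal is correct and follows essentially the paper's route: reduce to the tangent space $\mathbf{1}^{\perp}$, apply the implicit function theorem, and take norms. You work intrinsically where the paper uses an orthonormal basis $U$ with $UU^\top=P$, and your explicit constant replaces the paper's $L_{\mathcal{A}}=\|\partial_{\mathcal{A}}f(v;\mathcal{A})\|$. Your real addition is a proof that $J|_{\mathbf{1}^{\perp}}$ is invertible via $z^\top HJz\le -c\|z\|^2$, which the paper only attributes to ``irreducibility and the entropy structure''. Importing the hypotheses of Proposition~\ref{prop:quadratic-dissipation} for this is the right call, since Assumption~\ref{ass:gdb} alone does not suffice. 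One small slip in your aside: it is $J\,\mathrm{diag}(v)$, not $HJ$, that equals minus a weighted Laplacian $L$; $HJ=-HLH$ is only congruent to one, and this does not affect the conclusion.
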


\begin{proof}
Let $v\in\Delta_n^\circ$ be the nominal equilibrium satisfying $f(v;\mathcal{A})=0$
(Assumption~\ref{ass:gdb}). Because of mass conservation, any equilibrium lies
on the simplex $\Delta_n$ and, in particular, for the perturbed equilibrium
$\tilde v$ we have $\mathbf{1}^\top \tilde v=\mathbf{1}^\top v=1$, hence
$\tilde v - v \in \mathbf{1}^{\perp}$.

\smallskip
\noindent\textit{Step 1 (reduction to the tangent space).}
Let $P:=I-\frac{1}{n}\mathbf{1}\mathbf{1}^\top$ be the orthogonal projector onto
$\mathbf{1}^{\perp}$, and select an orthonormal basis matrix
$U\in\mathbb{R}^{n\times (n-1)}$ with columns spanning $\mathbf{1}^{\perp}$
so that $U^\top U=I$ and $UU^\top=P$.
Define the reduced vector field
\[
F(y,\mathcal{A})\;:=\;U^\top f\!\big(v+Uy\;;\;\mathcal{A}\big),
\qquad y\in\mathbb{R}^{n-1}.
\]
Then $F(0,\mathcal{A})=U^\top f(v;\mathcal{A})=0$ and
\[
\frac{\partial F}{\partial y}(0,\mathcal{A})
= U^\top \frac{\partial f}{\partial x}(v;\mathcal{A})\,U
=: J_T,
\]
where $J=\frac{\partial f}{\partial x}(v;\mathcal{A})$.
By irreducibility and the entropy structure, $J$ is Hurwitz on
$\mathbf{1}^{\perp}$, hence $J_T$ is nonsingular.

\smallskip
\noindent\textit{Step 2 (implicit function theorem on the manifold).}
Consider a small perturbation $\Delta\mathcal{A}$ and let
$\tilde y\in\mathbb{R}^{n-1}$ denote the solution of
\[
F(\tilde y,\mathcal{A}+\Delta\mathcal{A}) = 0
\quad\Longleftrightarrow\quad
U^\top f\!\big(v+U\tilde y\;;\;\mathcal{A}+\Delta\mathcal{A}\big)=0.
\]
By the implicit function theorem at $(y,\mathcal{A})=(0,\mathcal{A})$
and the nonsingularity of $J_T$, there exists a unique smooth map
$\tilde y(\cdot)$ with $\tilde y(\mathcal{A})=0$ and, for sufficiently small
$\|\Delta\mathcal{A}\|$,
\[
\tilde y(\mathcal{A}+\Delta\mathcal{A})
= -\,J_T^{-1}\,U^\top\,
\partial_{\mathcal{A}} f(v;\mathcal{A})[\Delta\mathcal{A}]
\;+\; \mathcal{O}(\|\Delta\mathcal{A}\|^2).
\]

\smallskip
\noindent\textit{Step 3 (lifting back to the simplex and first-order term).}
Since $\tilde v = v + U\tilde y$, we obtain
\[
\tilde v - v
= -\,U J_T^{-1} U^\top\,
\partial_{\mathcal{A}} f(v;\mathcal{A})[\Delta\mathcal{A}]
\;+\; \mathcal{O}(\|\Delta\mathcal{A}\|^2).
\]
Using $U J_T^{-1} U^\top = \big(J|_{\mathbf{1}^{\perp}}\big)^{-1} P$
(as the inverse on $\mathbf{1}^{\perp}$ and zero on $\mathrm{span}\{\mathbf{1}\}$),
we rewrite the expansion as
\begin{equation*}
\tilde v - v
= -\,\big(J|_{\mathbf{1}^{\perp}}\big)^{-1}
   P\,\partial_{\mathcal{A}} f(v;\mathcal{A})[\Delta\mathcal{A}]
\;+\; \mathcal{O}(\|\Delta\mathcal{A}\|^2),
\end{equation*}
which is the claimed first-order formula on the tangent space of the simplex.

\smallskip
\noindent\textit{Step 4 (norm bound).}
Taking norms and using submultiplicativity yields
\[
\|\tilde v - v\|
\;\le\;
\big\|\big(J|_{\mathbf{1}^{\perp}}\big)^{-1}\big\|\;
\big\|\partial_{\mathcal{A}} f(v;\mathcal{A})\big\|\;
\|\Delta\mathcal{A}\|
\;+\; \mathcal{O}(\|\Delta\mathcal{A}\|^2),
\]
which proves the sensitivity bound with
$L_{\mathcal{A}}:=\|\partial_{\mathcal{A}} f(v;\mathcal{A})\|$.
\end{proof}



\begin{remark}[Singularity and spectral gap of the Jacobian]\label{rem:jacobian-gap}
The Jacobian $J=\partial f/\partial x|_{x=v}$ always satisfies
$J\mathbf{1}=0$ because of the mass-conservation constraint
$\mathbf{1}^\top x=1$.
Hence $J$ is singular on $\mathbb{R}^n$ but invertible on the tangent
subspace $\mathbf{1}^{\perp}=\{z\in\mathbb{R}^n\mid \mathbf{1}^\top z=0\}$
of the simplex~$\Delta_n$.
All sensitivity and stability results
(Proposition~\ref{prop:sensitivity} and Theorem~\ref{thm:iss})
are therefore interpreted on this subspace, where
$J|_{\mathbf{1}^{\perp}}$ is Hurwitz under irreducibility.

The \emph{spectral gap} is defined as
\begin{equation}\label{eq:spectral-gap}
    \csx{c_{\textbf{gap}} }:= \min_{\lambda\in\sigma(J|_{\mathbf{1}^{\perp}})}
    \big(-\mathrm{Re}(\lambda)\big) > 0,
\end{equation}
which quantifies the slowest exponential decay rate of perturbations
orthogonal to~$\mathbf{1}$.  
If $J$ is symmetric (e.g., when $Q_{k\to i}(x)=Q_{i\to k}(x)$ at equilibrium),
then all eigenvalues are real and $c_{\textbf{gap}}$ coincides with the smallest nonzero one.

In the tensor-coupled hypergraph setting, the linearization at the equilibrium $v$
contracts each higher-order interaction tensor with $v$ and with the perturbation
direction, producing \emph{effective pairwise weights}. Concretely, entries of
the Jacobian $J=\partial f/\partial x|_{x=v}$ collect partial derivatives of the
higher-order flows; this operation projects multi-node (hyperedge) influences onto
pairwise couplings that govern the \emph{local} response. After projecting $J$ onto
$\mathbf{1}^{\perp}$ (to remove the mass-conserving uniform mode), these aggregated
pairwise terms form an \emph{effective Laplacian matrix} on the tangent space.
In the special case of purely pairwise interactions, this reduces
to the standard graph Laplacian and
$c_{\textbf{gap}}$ becomes its second smallest eigenvalue in the classical analysis. 
Hence $c_{\textbf{gap}}$ generalizes the algebraic connectivity to higher-order hypergraphs,
indicating how strongly the system resists and dissipates perturbations.
\end{remark}

\paragraph{Interpretation}
Equation~\eqref{eq:sens-norm} shows that the equilibrium deviation
$\|\tilde v - v\|$ scales linearly with the perturbation amplitude
$\|\Delta\mathcal{A}\|$, with proportionality factor
$\big\|\big(J|_{\mathbf{1}^{\perp}}\big)^{-1}\big\|L_{\mathcal{A}}$.
The inverse norm $\big\|\big(J|_{\mathbf{1}^{\perp}}\big)^{-1}\big\|^{-1}$
represents the smallest contraction rate of the linearized dynamics on
the tangent subspace $\mathbf{1}^{\perp}$, which is bounded below by the
\emph{spectral gap} \eqref{eq:spectral-gap}.
A larger spectral gap~$c_{\textbf{gap}}$ implies faster local convergence and
smaller sensitivity to structural perturbations, providing a quantitative
link between dynamical stability and robustness.

\paragraph{Practical relevance}
In tensor-coupled or hypergraph systems, $\Delta\mathcal{A}$ may represent
local rewiring, edge-weight noise, or slight symmetry breaking.
Proposition~\ref{prop:sensitivity} provides a first-order robustness metric:
networks with larger convergence gaps (smaller $\|J^{-1}\|$) are less
susceptible to equilibrium drift when their topology is perturbed.

The next subsection extends this static analysis to dynamic robustness,
establishing local ISS inequalities for the trajectories
of the perturbed system. The locality comes from estimating the perturbation
terms on a compact neighborhood of~$v$; by contrast, the nominal entropy
dissipation in Proposition~\ref{prop:quadratic-dissipation} remains global
under its stronger uniform-positivity hypothesis.

\subsection{Local input-to-state stability under entropy Lyapunov}

We now extend the previous static sensitivity result to a dynamic
robustness analysis.  The aim is to characterize how the trajectories of the
perturbed system~\eqref{eq:perturbed-system} deviate from the nominal
equilibrium~$v$ under small topological perturbations $\Delta\mathcal{A}$ and
bounded external inputs~$w(t)$.

\begin{theorem}[\csx{Local ISS inequality under entropy Lyapunov}]\label{thm:iss}
Let Assumption~\ref{ass:gdb} hold, and let $\Omega\Subset\Delta_n^\circ$ be a
compact neighborhood of~$v$. Suppose that in the unperturbed case there exists
$c>0$ such that
\[
\dot V(x)\le -c\|x-v\|^2,
\qquad \forall x\in\Omega.
\]
This assumption is in particular satisfied if
Proposition~\ref{prop:quadratic-dissipation} holds on all of
$\Delta_n^\circ$.
Consider the perturbed system~\eqref{eq:perturbed-system} with
$\mathbf{1}^\top w(t)=0$ and sufficiently small $\Delta\mathcal{A}$.
Then there exist constants $C_1,C_2,\eta_\Omega>0$ such that, for every
trajectory that remains in~$\Omega$,
\begin{equation}\label{eq:Vdot-perturbed}
    \dot V(x)
    \;\le\;
    -\,\frac{c}{2}\,\|x-v\|^2
    + C_1\,\|\Delta\mathcal{A}\|^2
    + C_2\,\|w(t)\|^2 .
\end{equation}
Consequently, for all $t\ge 0$,
\begin{equation}\label{eq:ISS-bound}
    V(x(t))
    \le e^{-\eta_\Omega t} V(x(0))
    + \frac{C_1}{\eta_\Omega}\|\Delta\mathcal{A}\|^2
    + \frac{C_2}{\eta_\Omega}\!\!\sup_{s\le t}\|w(s)\|^2 .
\end{equation}
In particular, because there exist $m_\Omega,M_\Omega>0$ such that
\[
m_\Omega\|x-v\|^2 \le V(x)\le M_\Omega\|x-v\|^2,
\qquad \forall x\in\Omega,
\]
the equilibrium $v$ satisfies \csx{a local ISS estimate with respect to the
disturbances $\Delta\mathcal{A}$ and $w$}.
\end{theorem}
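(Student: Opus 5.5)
The plan is to differentiate $V(x)=D_{\mathrm{KL}}(x\Vert v)$ along the perturbed vector field and split it into the nominal dissipation plus two perturbation terms. Writing
\[
f(x;\mathcal{A}+\Delta\mathcal{A})+w = f(x;\mathcal{A}) + \big(f(x;\mathcal{A}+\Delta\mathcal{A})-f(x;\mathcal{A})\big) + w ,
\]
and using $\mathbf{1}^\top\dot x=0$ exactly as in \eqref{eq:Vdot-step1}, we obtain
\[
\dot V(x)=\nabla V(x)^\top f(x;\mathcal{A}) + g(x)^\top\big(f(x;\mathcal{A}+\Delta\mathcal{A})-f(x;\mathcal{A})\big) + g(x)^\top w ,
\qquad g_i(x):=\log\tfrac{x_i}{v_i}.
\]
The first term is bounded by $-c\|x-v\|^2$ on $\Omega$ by hypothesis. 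The point is that $g(v)=0$, so the perturbation terms are multiplied by a factor that vanishes at the equilibrium. Young's inequality can then convert them into $\frac{c}{4}\|x-v\|^2$ plus squares of the disturbances.

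\textbf{Bounding the perturbation terms.} Since $\Omega\Subset\Delta_n^\circ$ is compact, $x_i\ge x_{\min}>0$ on $\Omega$. The map $\log$ is Lipschitz on $[x_{\min},1]$, so $\|g(x)\|\le L_g\|x-v\|$ with $L_g=1/\min(x_{\min},v_{\min})$. Next, $f$ is affine in $\mathcal{A}$: each $Q_{k\to i}$ is linear in the coefficients $a^r_{ikI}$, with polynomial factors bounded on $\Delta_n$. Hence $\|f(x;\mathcal{A}+\Delta\mathcal{A})-f(x;\mathcal{A})\|\le L_{\mathcal{A}}'\|\Delta\mathcal{A}\|$ uniformly on $\Omega$, with no smallness needed here. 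Applying Young's inequality twice,
\[
L_gL_{\mathcal{A}}'\|x-v\|\|\Delta\mathcal{A}\|\le \tfrac{c}{4}\|x-v\|^2+\tfrac{(L_gL_{\mathcal{A}}')^2}{c}\|\Delta\mathcal{A}\|^2,
\qquad
L_g\|x-v\|\|w\|\le \tfrac{c}{4}\|x-v\|^2+\tfrac{L_g^2}{c}\|w\|^2 .
\]
This gives \eqref{eq:Vdot-perturbed} with $C_1=(L_gL_{\mathcal{A}}')^2/c$ and $C_2=L_g^2/c$. Note that $\mathbf{1}^\top w=0$ is used only to keep trajectories on the simplex and to drop the constant $+1$ in $\nabla V$.

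\textbf{Integrating the inequality.} I will establish the sandwich $m_\Omega\|x-v\|^2\le V\le M_\Omega\|x-v\|^2$ on $\Omega$. Since $\sum_i(x_i-v_i)=0$, one can write $V(x)=\sum_i\big[x_i\log(x_i/v_i)-x_i+v_i\big]$. Each summand $h(x_i)$ satisfies $h(v_i)=h'(v_i)=0$ and $h''(s)=1/s$. Taylor's theorem with $s\in[x_{\min},1]$ then gives the bounds with $m_\Omega=\tfrac12$ and $M_\Omega=1/(2\min(x_{\min},v_{\min}))$; alternatively, Pinsker's inequality gives the lower bound. Substituting $-\|x-v\|^2\le -V/M_\Omega$ yields
\[
\dot V\le -\eta_\Omega V + C_1\|\Delta\mathcal{A}\|^2+C_2\|w\|^2, \qquad \eta_\Omega=\tfrac{c}{2M_\Omega}.
\]
The comparison lemma (Gr\"onwall) then gives \eqref{eq:ISS-bound}, using $\int_0^t e^{-\eta(t-s)}ds\le 1/\eta$ and taking the supremum of $\|w\|^2$.

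\textbf{Local ISS and the main obstacle.} Converting via the sandwich gives an estimate of the form
\[
\|x(t)-v\|\le \sqrt{M_\Omega/m_\Omega}\,e^{-\eta_\Omega t/2}\|x(0)-v\|+\gamma(\|\Delta\mathcal{A}\|)+\gamma(\|w\|_\infty),
\]
which is the LISS form of Definition~\ref{def:liss}. Alternatively, one may invoke Lemma~\ref{lem:liss} directly with $\alpha_1,\alpha_2$ quadratic and $\sigma$ quadratic, treating $(\Delta\mathcal{A},w)$ as the input. The main obstacle is the standing hypothesis that the trajectory remains in $\Omega$; the statement assumes it, but for a genuine local ISS claim I would justify it by forward invariance. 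I would take $\Omega$ to contain a sublevel set $\{V\le\rho\}$, which is compact in $\Delta_n^\circ$ for small $\rho$. The differential inequality shows $V$ cannot exceed $\rho$ if $V(x(0))<\rho$ and
\[
C_1\|\Delta\mathcal{A}\|^2+C_2\sup\|w\|^2<\eta_\Omega\rho .
\]
This fixes the smallness radius $r$ in Definition~\ref{def:liss} and explains why $\Delta\mathcal{A}$ must be small.
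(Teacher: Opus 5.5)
Your proposal is correct and follows the same route as the paper. You split $\dot V$ into the nominal dissipation plus the $\Delta\mathcal{A}$- and $w$-terms, bound the latter with a gradient factor vanishing linearly at $v$ and a uniform Lipschitz estimate in $\mathcal{A}$, apply Young's inequality with $\varepsilon=c/2$, and close with the quadratic sandwich and Gr\"onwall, arriving at the same $C_1$, $C_2$ and $\eta_\Omega=c/(2M_\Omega)$. Two of your refinements are worth keeping: replacing $\nabla V$ by $g_i(x)=\log(x_i/v_i)$ (justified by $\mathbf{1}^\top\dot x=0$) repairs the paper's literally false claim $\nabla V(v)=0$ (in fact $\nabla V(v)=\mathbf{1}$), and your sublevel-set argument supplies the forward invariance of $\Omega$ that the paper simply assumes.
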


\begin{proof}
Define the Lyapunov function as in \eqref{eq:V}.
Since $\Omega$ is a compact subset of $\Delta_n^\circ$ and
$\nabla^2 V(x)=\mathrm{diag}(1/x_1,\dots,1/x_n)$, there exists
$\kappa_\Omega>0$ such that
\[
\|\nabla V(x)-\nabla V(v)\|
\le \kappa_\Omega \|x-v\|,
\qquad \forall x\in\Omega.
\]
Because $\nabla V(v)=0$, this yields
\[
\|\nabla V(x)\|\le \kappa_\Omega \|x-v\|,
\qquad \forall x\in\Omega.
\]
Moreover, since $V$ is $C^2$ in a neighborhood of~$v$ with positive definite
Hessian $\nabla^2V(v)=\mathrm{diag}(1/v_1,\dots,1/v_n)$, the ratio
$x\mapsto V(x)/\|x-v\|^2$ extends continuously to~$x=v$. Because $\Omega$ is
compact, there exist $m_\Omega,M_\Omega>0$ such that
\[
 m_\Omega \|x-v\|^2 \le V(x)\le
 M_\Omega\|x-v\|^2,
\qquad \forall x\in\Omega.
\]

We have 
\begin{align}
\dot V(x)
=& \nabla V(x)^{\!\top} f(x;\mathcal{A})
 + \nabla V(x)^{\!\top}
  \nonumber\\& \big(f(x;\mathcal{A}+\Delta\mathcal{A})
      - f(x;\mathcal{A})\big)
 + \nabla V(x)^{\!\top} w(t).
\end{align}
The first term is bounded by $-c\|x-v\|^2$ from the unperturbed case.
Since $f(\cdot;\mathcal{A})$ is locally Lipschitz in~$\mathcal{A}$ and~$x$,
there exists $L_{\mathcal{A},\Omega}>0$ such that
\[
\|\!f(x;\mathcal{A}+\Delta\mathcal{A})-f(x;\mathcal{A})\!\|
 \le L_{\mathcal{A},\Omega}\,\|\Delta\mathcal{A}\|,
 \qquad \forall x\in\Omega.
\]
Using Cauchy--Schwarz \cite{aldaz2015advances}, we obtain
\[
|\nabla V(x)^{\!\top}
   (f(x;\mathcal{A}+\Delta\mathcal{A})-f(x;\mathcal{A}))|
 \le \kappa_\Omega\,L_{\mathcal{A},\Omega}\,\|x-v\|\,\|\Delta\mathcal{A}\|.
\]
Applying Young's inequality \cite{nielsen1994sharpness},
\[
ab\le \frac{\varepsilon}{2}a^2+\frac{1}{2\varepsilon}b^2,
\qquad \varepsilon>0,
\]
with $a=\|x-v\|$, $b=\kappa_\Omega L_{\mathcal{A},\Omega}\|\Delta\mathcal{A}\|$,
and $\varepsilon=c/2$, gives
\[
\kappa_\Omega\,L_{\mathcal{A},\Omega}\|x-v\|\|\Delta\mathcal{A}\|
 \le \frac{c}{4}\|x-v\|^2
    + \frac{\kappa_\Omega^2 L_{\mathcal{A},\Omega}^2}{c}\|\Delta\mathcal{A}\|^2.
\]
Similarly,
$\nabla V(x)^{\!\top}w(t)\le \frac{c}{4}\|x-v\|^2
  + \frac{\kappa_\Omega^2}{c}\|w(t)\|^2$.
Combining these bounds yields~\eqref{eq:Vdot-perturbed} with
$C_1=\frac{\kappa_\Omega^2 L_{\mathcal{A},\Omega}^2}{c}$ and
$C_2=\frac{\kappa_\Omega^2}{c}$.
Using $V(x)\le M_\Omega\|x-v\|^2$ on~$\Omega$, we further obtain
\[
\dot V(x)\le -\frac{c}{2M_\Omega}V(x)
  + C_1\|\Delta\mathcal{A}\|^2
  + C_2\|w(t)\|^2.
\]
Setting $\eta_\Omega:=\frac{c}{2M_\Omega}$ and applying Gr\"onwall's
inequality \cite{gronwall1919note} yields~\eqref{eq:ISS-bound}. Since $V$ and
$\|x-v\|^2$ are locally equivalent on~$\Omega$, this proves the claimed local
ISS estimate around~$v$.
\end{proof}

\paragraph{Interpretation}
Inequality~\eqref{eq:Vdot-perturbed} reveals that the entropy function~$V(x)$
remains a Lyapunov function for the perturbed system, up to additive terms that
scale quadratically with the perturbation magnitudes.  The exponential factor
in~\eqref{eq:ISS-bound} describes the nominal local convergence toward~$v$, while the
two constant terms represent the residual error floors induced by
$\Delta\mathcal{A}$ and $w(t)$.  When both vanish, the estimate reduces to the
nominal decay law, and the perturbed dynamics collapse to the unperturbed
system.

\paragraph{Physical implication}
The spectral gap~$c_{\textbf{gap}}$ captures the intrinsic rate of local entropy dissipation.
Networks with larger~$c_{\textbf{gap}}$ exhibit faster convergence and smaller sensitivity to
topological or external disturbances.  In practice, the steady-state bias
$\|\tilde v-v\|$ predicted by Proposition~\ref{prop:sensitivity} coincides with
the residual floor implied by~\eqref{eq:ISS-bound}, providing a consistent
measure of static and dynamic robustness.

\subsection{Discussion and implications}

The results derived above demonstrate that the entropy-dissipating structure of
tensor-coupled dynamics is preserved under small perturbations of the coupling
tensor and under bounded external inputs.  The equilibrium shift and the local ISS
inequality jointly provide a quantitative description of the system's
robustness properties.

\paragraph{Spectral interpretation}
Both the sensitivity bound~\eqref{eq:sens-norm} and the local ISS inequality
\eqref{eq:ISS-bound} depend on the spectral properties of the Jacobian
$J=\partial f/\partial x|_{x=v}$.  The smallest nonzero magnitude of
$\mathrm{Re}(\lambda(J))$ determines the \emph{spectral gap}~$c_{\textbf{gap}}$, which governs
the exponential convergence rate of the unperturbed system and the attenuation
of perturbations.  Larger spectral gaps correspond to stronger contraction in
the subspace $\mathbf{1}^{\perp}$ and thus to higher structural robustness.

\paragraph{Entropy dissipation under perturbations}
Even in the presence of $\Delta\mathcal{A}$ and $w(t)$, the relative entropy
$V(x)=D_{\mathrm{KL}}(x\Vert v)$ remains nonincreasing except for a small
offset proportional to $\|\Delta\mathcal{A}\|^2$ and $\|w\|^2$.  Hence, the
perturbed system retains a \emph{dissipative envelope}: trajectories converge
toward a neighborhood of the nominal equilibrium whose radius is of order
$\mathcal{O}(\|\Delta\mathcal{A}\|+\|w\|)$.

\paragraph{Static versus dynamic robustness}
The equilibrium sensitivity (Proposition~\ref{prop:sensitivity}) quantifies how
the steady state itself is displaced by topological changes, while the local ISS
inequality (Theorem~\ref{thm:iss}) characterizes the transient response to
time-varying disturbances.  Together, they form a unified view of robustness:
the same spectral and geometric features of the generator that ensure rapid
convergence also determine how the system tolerates perturbations.

\paragraph{Design implications}
These analytical insights suggest that enhancing the spectral gap of the
linearized generator---for example, by strengthening inter-hyperedge
connectivity or balancing flow asymmetries---can improve both convergence speed
and resilience to uncertainty.  The developed framework therefore offers a
principled tool for the analysis and design of robust higher-order networks.

\paragraph{Numerical validation}
In the following section, we illustrate the theoretical findings using
synthetic and real hypergraph examples.  We examine how the equilibrium
distance $\|\tilde v-v\|$ scales with the perturbation norm
$\|\Delta\mathcal{A}\|$ and verify that the time evolution of the entropy
function~$V(x(t))$ obeys the predicted local ISS envelope.

\section{Numerical experiments}

To illustrate the theoretical results, we perform the following experiments.

\subsection{Numerical Simulations}

We consider the generator-form dynamics \eqref{eq:generator} on the simplex
$\Delta_n$ with $n=8$ agents. The base matrix $S\in\mathbb{R}^{n\times n}_{\ge0}$ is symmetric, nonnegative, and has a zero diagonal, encoding the pairwise interaction intensities among nodes. The state-dependent transition rates are given by
\begin{equation}
    Q_{k\to i}(x) = S_{ki}\Big(1 + \alpha \sum_{j=1}^{n} x_j^2\Big),
    \qquad \alpha = 0.8,
\end{equation}
which preserve the generalized detailed balance at the uniform equilibrium $v = \mathbf{1}/n$.  
The vector field is integrated using a fourth-order Runge--Kutta method with projection onto the simplex $\Delta_n$ at each substep to avoid numerical drift. The initial condition is a random point in $\Delta_n$, and the time step is $\Delta t = 10^{-2}$.

\paragraph*{a) Baseline case}
Figure~\ref{fig:state_baseline} (\textit{Numerical case: state trajectory}) shows the time evolution of all state components without perturbations.  
The trajectories monotonically converge to the uniform equilibrium $v$, confirming the global asymptotic stability of the detailed-balanced dynamics.

\paragraph*{b) Perturbed case}
Topological perturbations are modeled by replacing $S$ with $S + \Delta S$, where $\Delta S$ is a zero-diagonal, randomly generated noise matrix scaled to a prescribed Frobenius norm $\|\Delta S\|_F \simeq 0.3$ and then clipped to preserve nonnegativity.  
In addition, zero-sum sinusoidal inputs $w(t)$ with amplitude $\rho = 0.03$ are applied.  
Figure~\ref{fig:state_perturbed} (\textit{Numerical case: state trajectory under perturbation}) shows that all trajectories remain bounded and close to the equilibrium manifold, exhibiting small steady-state oscillations.  
This demonstrates the robustness of the system against both structural and external perturbations.

\paragraph*{c) Equilibrium sensitivity}
To quantify the equilibrium sensitivity predicted by Proposition~\ref{prop:sensitivity}, the perturbation magnitude is swept over nine levels in $[0, 0.6]$.  
For each level, the dynamics are integrated to steady state, and the resulting equilibrium shift $\|\tilde v - v\|_2$ is recorded.  
Figure~\ref{fig:equil_shift} (\textit{Equilibrium sensitivity}) presents the results in log--log scale, showing an approximately linear relation between $\|\tilde v - v\|_2$ and $\|\Delta S\|_F$, which agrees with the first-order sensitivity bound.

\begin{figure}[t]
    \centering
    \includegraphics[width=0.7\linewidth]{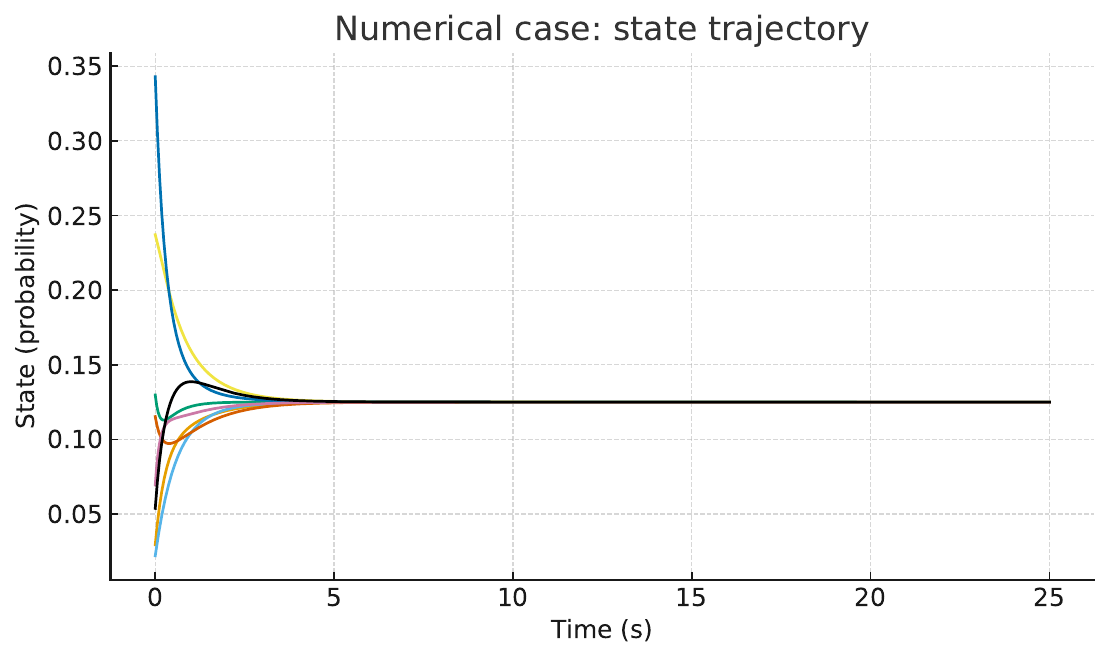}
    \caption{Numerical case: state trajectory.}
    \label{fig:state_baseline}
\end{figure}

\begin{figure}[t]
    \centering
    \includegraphics[width=0.7\linewidth]{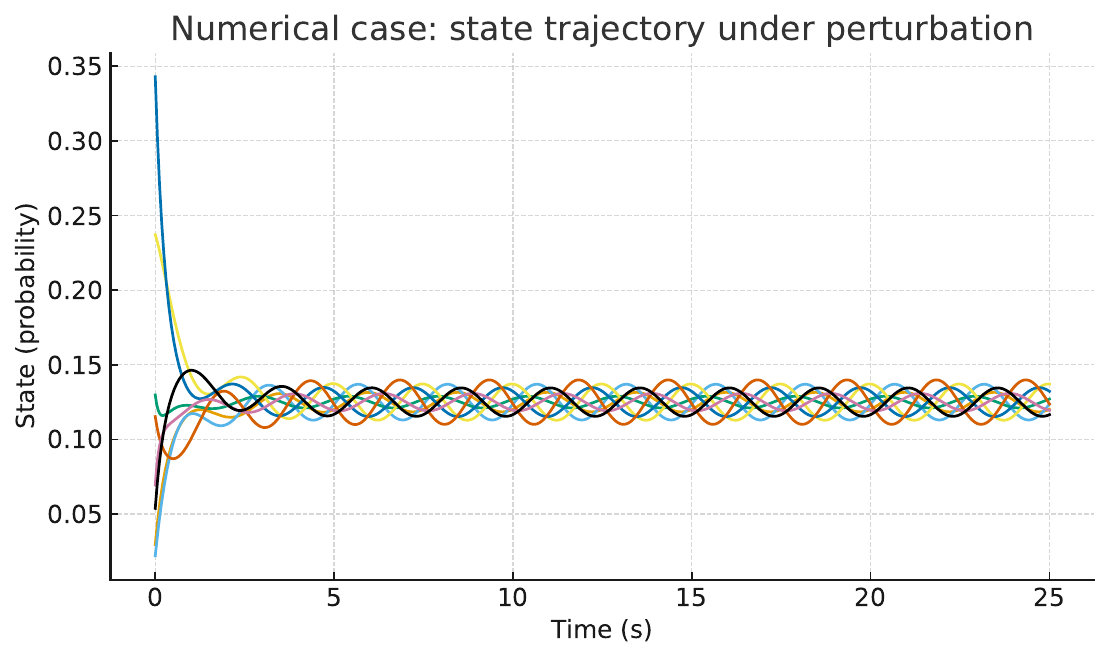}
    \caption{Numerical case: state trajectory under perturbation.}
    \label{fig:state_perturbed}
\end{figure}

\begin{figure}[t]
    \centering
    \includegraphics[width=0.7\linewidth]{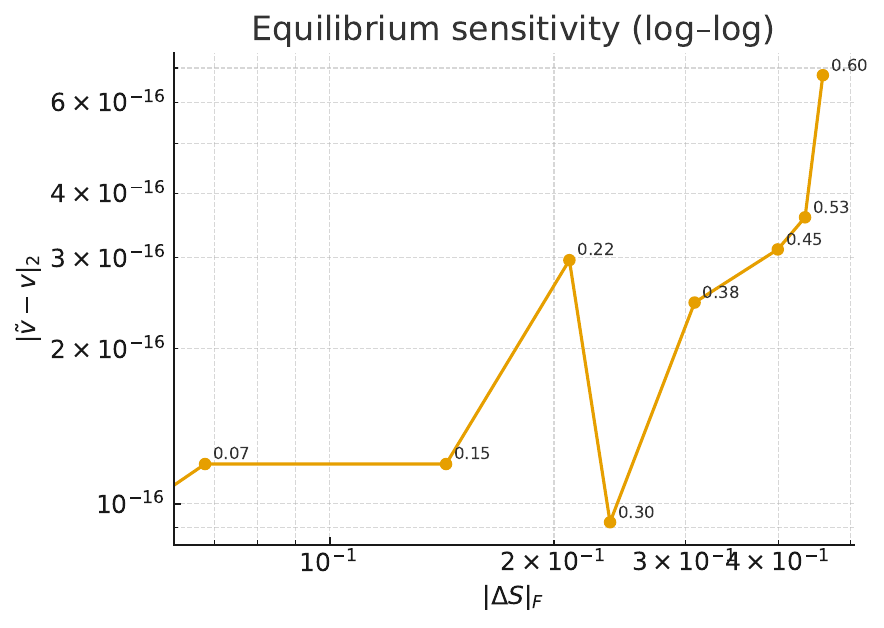}
    \caption{Equilibrium sensitivity (log--log).}
    \label{fig:equil_shift}
\end{figure}

\subsection{Multi-Agent System Simulation}

To further demonstrate the applicability of the proposed generator-form conservative framework to cooperative networked systems, we consider a planar multi-agent system modeled by double-integrator dynamics. For $i\in\{1,\ldots,m\}$, the dynamics are
\begin{align}
   & \dot p_i = v_i, \nonumber \\
   & \dot v_i = u_i, \nonumber 
\end{align}
where $p_i,v_i\in\mathbb{R}^2$ denote the position and velocity of the $i$-th agent, respectively.
Each agent follows a momentum-conserving, generator-type coupling rule given by
\begin{equation} \label{multi}
    u_i = -k_p \!\sum_{k=1}^{m} L_{ik}(x)\,p_k
           -k_d \!\sum_{k=1}^{m} L_{ik}(x)\,v_k + w_i(t),
\end{equation}
where $L(x)=D(x)-W(x)$ is the state-dependent Laplacian matrix with
\begin{equation}
    W(x) = S\Big(1+\alpha(\|p\|_F^2+\|v\|_F^2)\Big),
\end{equation}
and $S\in\mathbb{R}_{\ge0}^{m\times m}$ is a symmetric, zero-diagonal base matrix describing the interaction topology. $p=(p_1^T,p_2^T,\cdots, p_m^T)^T$, $v=(v_1^T,v_2^T,\cdots,v_m^T)^T$. 
It can be interpreted as: the coupling weight $W(x)$ varies with the overall motion energy of the overall system.
This design guarantees $\sum_i u_i = 0$, which ensures conservation of the total momentum $\sum_i v_i$ of the entire multi-agent system, consistent with the generator-form conservation principle.

\begin{figure}[t]
    \centering
    \includegraphics[width=0.7\linewidth]{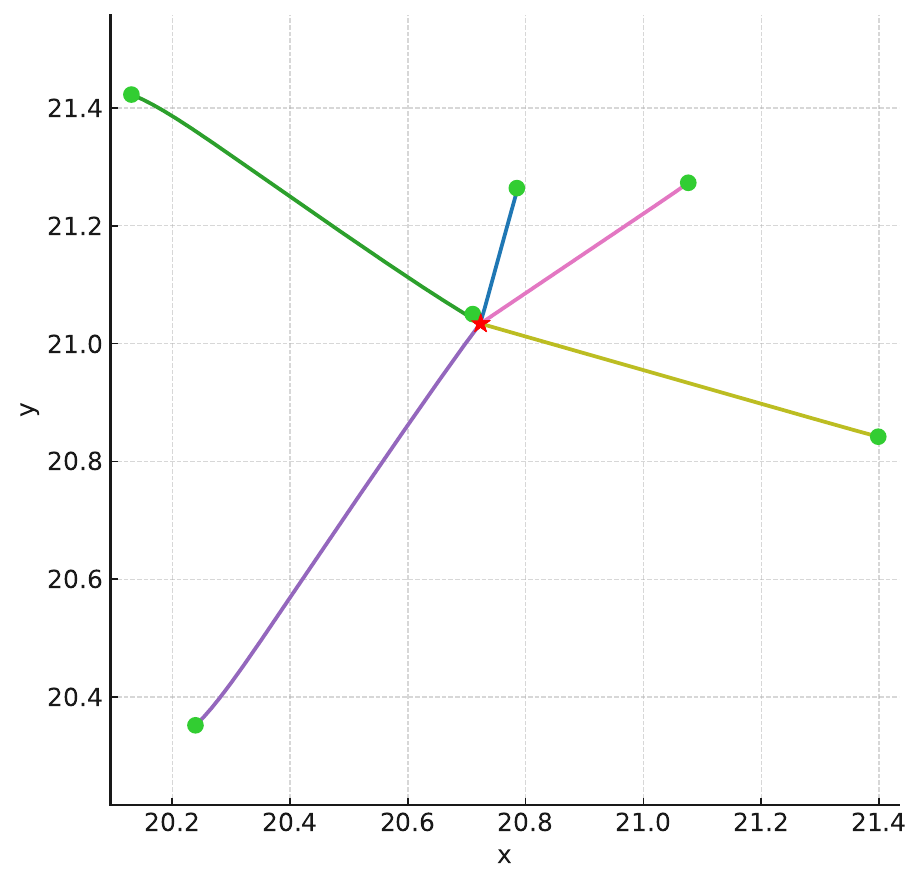}
    \caption{Trajectories of multi-agent system on conservative hypergraphs.}
    \label{fig:agent_traj}
\end{figure}

\begin{figure}[t]
    \centering
    \includegraphics[width=0.7\linewidth]{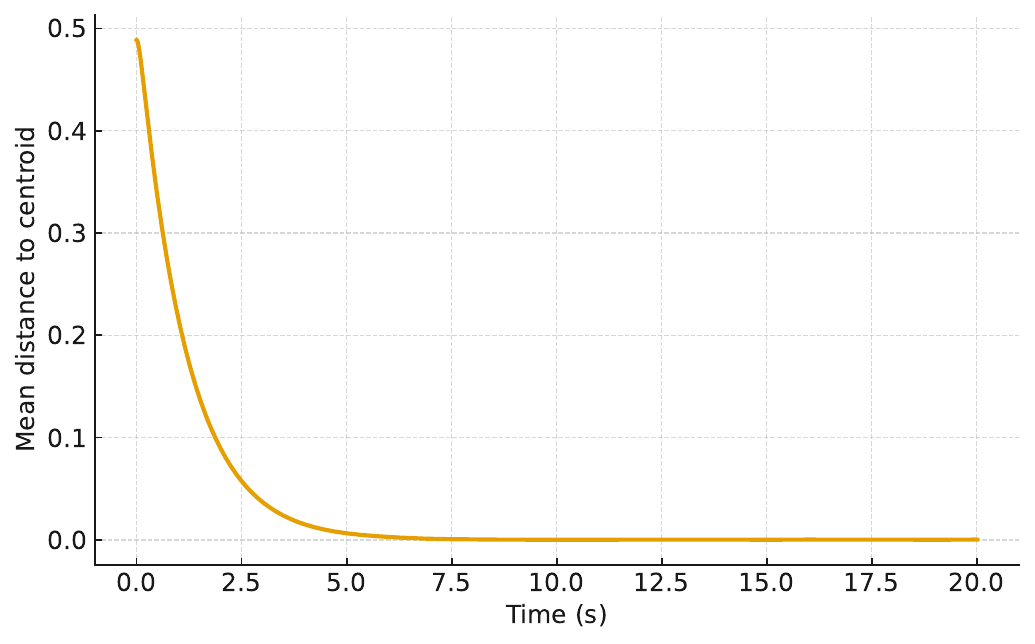}
    \caption{Mean distance of multi-agent system on conservative hypergraphs.}
    \label{fig:agent_dist}
\end{figure}


The simulation parameters are chosen as $m=6$, $k_p=1.0$, $k_d=1.2$, $\alpha=0.02$, and the integration step $\Delta t=0.02\,\mathrm{s}$.  
The base topology $S$ is randomly generated to be connected and symmetric.  
The agents are initially scattered around $[-8,4]$ with small random offsets and zero initial velocities.  
A zero-sum external disturbance is applied as $w_1(t)=-w_2(t)=[0.6\sin(2\pi 0.25t),0]^{\!\top}$ and $w_i(t)=0$ for $i>2$, ensuring that the net external input remains zero.

Figure~\ref{fig:agent_traj} illustrates the planar trajectories of all agents.  
The group collectively converges toward the centroid while maintaining total momentum conservation.
Figure~\ref{fig:agent_dist} depicts the mean distance to the centroid, which decays rapidly before reaching a small oscillatory steady state under the zero-sum input.


\section{Conclusion}
\csx{This paper develops an entropy-based framework for the stability and robustness analysis of tensor-coupled flow-conservation dynamics on hypergraphs. By formulating the system in generator form on the simplex, we establish global asymptotic stability under a generalized detailed-balance condition, with the Kullback--Leibler divergence as a strict Lyapunov function. We further relate entropy dissipation, convergence, and robustness through the spectral gap of the Jacobian, and derive both first-order equilibrium sensitivity bounds and a local ISS estimate under perturbations. Numerical examples confirm the theory and illustrate the applicability of the framework to conservatively coupled multi-agent systems. }

\bibliographystyle{IEEEtran}
\bibliography{bib}

\vspace{-1 cm}
\begin{IEEEbiography}[{\includegraphics[width=1.4in,height=1.35in,clip,keepaspectratio]{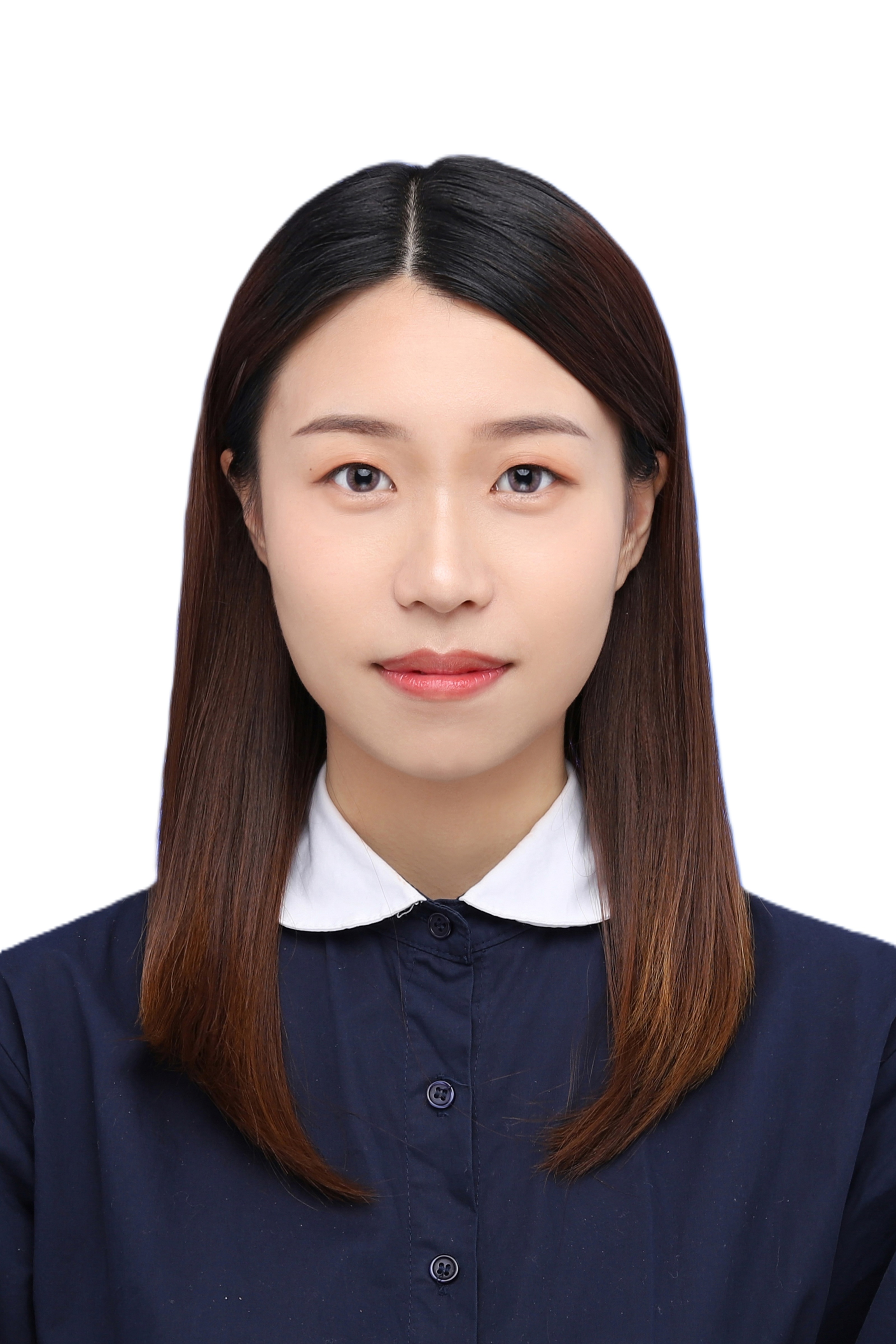}}]{Chencheng Zhang}
received the B.S. degree in automation from the College of Information Engineering, Yangzhou University, Yangzhou, China, in 2017. She was a visiting
 Ph.D. student at the University of Groningen from 2022 to 2023.
She received the Ph.D. degree in control theory and control engineering from Nanjing University of Aeronautics and Astronautics, Nanjing, China, in
2024, where she is currently a Post-Doctoral Researcher.
Her current research focuses on analysis of higher-order network systems, fault estimation and fault-tolerant control for networked control systems and helicopter applications.
\end{IEEEbiography}

\vspace{-1 cm}
\begin{IEEEbiography}[{\includegraphics[width=1in,height=1.25in,clip,keepaspectratio]{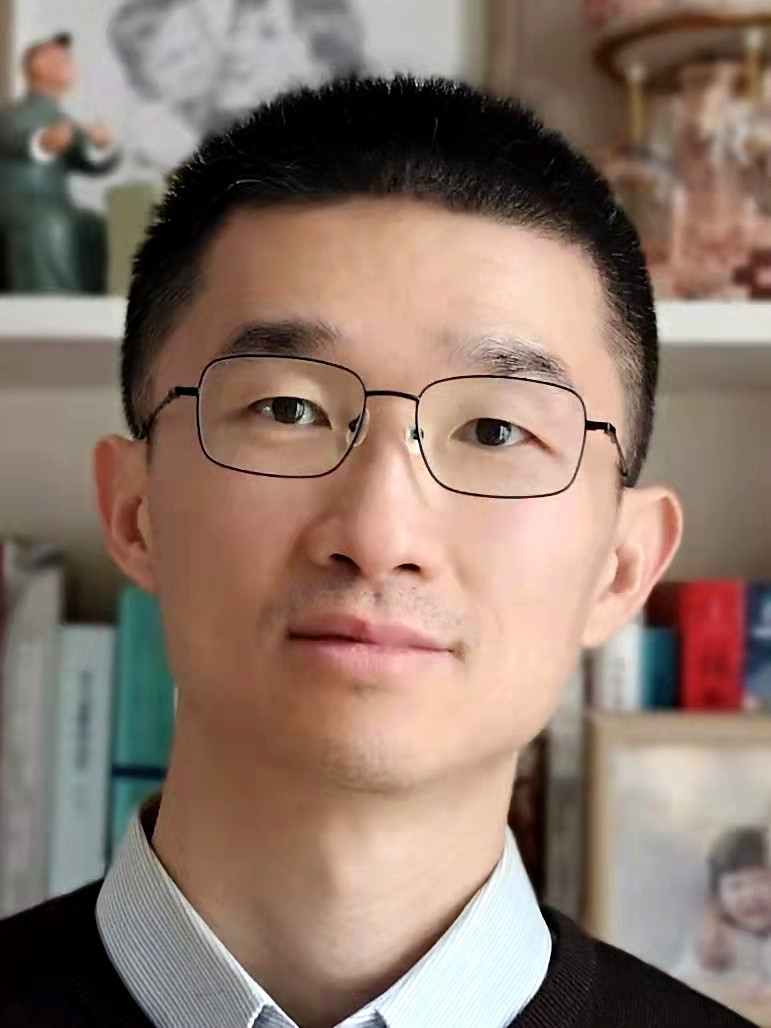}}]{Hao Yang}
(M'13-SM'15) received the B.Sc. degree in electrical automation from Nanjing Tech University, Nanjing, China, in 2004, and the Ph.D. degrees in automatic
control from Universit\'e de Lille: Sciences et Technologies, Lille, France, and Nanjing University of Aeronautics and Astronautics (NUAA), Nanjing, China, both in 2009. Since 2010, he has been working with College of Automation Engineering in NUAA, where he has been a Full Professor since 2015. His research interest includes control, optimization, game and fault tolerance of switched and network systems with their aerospace applications.

Dr. Yang was the recipient of the National Science Fund of China for Excellent Young Scholars in 2016, and the Top-Notch Young Talents of Central Organization Department of China in 2017. He has served as Associate Editor for Nonlinear Analysis: Hybrid Systems, Cyber-Physical Systems, Acta Automatica Sinica, and Chinese Journal of Aeronautics. He is a member of the IFAC Technical Committee on Fault Detection, Supervision and Safety of Technical Processes.
\end{IEEEbiography}

\vspace{-1 cm}
\begin{IEEEbiography}[{\includegraphics[width=1in,height=1.25in,clip,keepaspectratio]{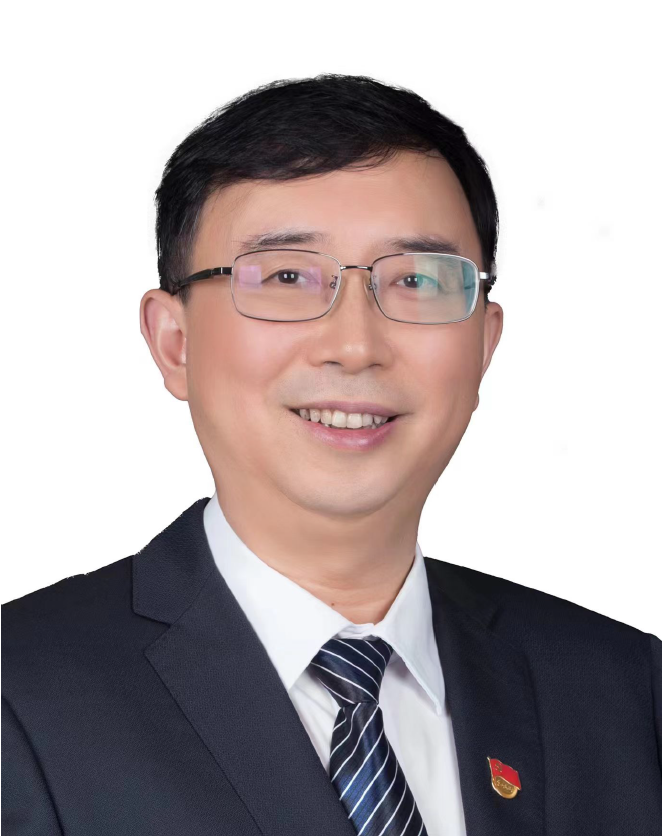}}]{Bin Jiang}
(M'03-SM'05-F'20) received the Ph.D.degree in automatic control from Northeastern University, Shenyang, China, in 1995. He had ever been a Post-Doctoral Fellow, a Research Fellow,
an Invited Professor, and a Visiting Professor in Singapore, France, USA, and Canada, respectively.
He is currently the President of Nanjing University of Aeronautics and Astronautics, Nanjing, China and the Chair Professor of Cheung Kong Scholar Program with the Ministry of Education.
His current research interests include intelligent fault diagnosis and fault tolerant control and their applications to helicopters, satellites, and high-speed trains. He has authored eight books in the related
fields.

Dr. Jiang was a recipient of the Second Class Prize of National Natural Science Award of China. He is a Fellow of Chinese Association of Automation (CAA). He currently serves as an Editor of International Journal of Control, Automation and Systems, and an Associate Editor or an Editorial Board Member for a number of journals, such as IEEE Trans. on Cybernetics, Journal of the Franklin Institute, Neurocomputing, etc. He is a Chair of Control Systems Chapter in IEEE Nanjing Section, and a member of IFAC Technical Committee on Fault Detection, Supervision, and Safety of Technical Processes.
\end{IEEEbiography}

\vspace{-1 cm}
\begin{IEEEbiography}[{\includegraphics[width=1in,height=1.25in,clip,keepaspectratio]{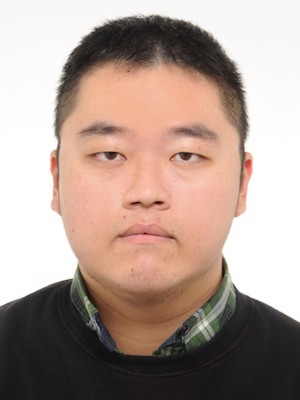}}]{Shaoxuan Cui} is currently a postdoc at Engineering and Technology Institute Groningen (ENTEG), the University of Groningen, the Netherlands. He received his Ph.D. degree from the Bernoulli Institute for Mathematics, Computer Science and Artificial Intelligence of the University of Groningen, the Netherlands. He received his B.Sc. degree in Electrical Engineering and Information Technology from the Technical University of Kaiserslautern, Germany, and Fuzhou University, China in 2018, and his M.Sc. Degree in Electrical Engineering and Information Technology from the Technical University of Munich, Germany, in 2020. His research interests include modeling, analysis, and control of networked systems.
\end{IEEEbiography}

\end{document}